\newtheorem{theorem}{Theorem}[section]
\newtheorem{lemma}[theorem]{Lemma}
\newtheorem{corollary}[theorem]{\indent Corollary}
\newcommand{\qed}{\nopagebreak \hfill $\Box$}
\newenvironment{proof}{\par \noindent {\em Proof:}}{\qed \par}
\newcommand{\ol}{\setlength{\itemsep}{0pt.}\begin{enumerate}}
\newcommand{\eol}{\end{enumerate}\setlength{\itemsep}{-\parsep}}
\newcommand{\E}{{\mbox {\bf E}}}
\newcommand{\C}{\mathcal{C}}
\newcommand{\F}{\mathcal{F}}
\newcommand{\rev}[1]{{#1} }
\renewcommand{\S}{\rev{Section}}
\begin{document}

\title{Approximation algorithms for stochastic and risk-averse
optimization\thanks{An earlier version of this paper appeared as \cite{Srinivasan07}.}}
\author{Jaroslaw Byrka\thanks{Institute of Computer Science, University of Wroclaw, Joliot-Curie 15, 50-383 Wroclaw, Poland. {\tt jby@ii.uni.wroc.pl}
}~~and Aravind Srinivasan\thanks{Department of Computer Science and
Institute for Advanced Computer Studies, University of Maryland,
College Park, MD 20742, USA. \texttt{srin@cs.umd.edu}}}
\date{}
\maketitle

\begin{abstract}
We present improved approximation algorithms in stochastic optimization.
We prove that the multi-stage stochastic versions of
covering integer programs (such as set cover and vertex
cover) admit essentially the same approximation algorithms as their
standard (non-stochastic) counterparts; this improves upon 
work of Swamy \& Shmoys which shows an approximability that depends
multiplicatively on the number of stages. We also present 
approximation algorithms for facility location and some of its
variants in the $2$-stage recourse model, improving on previous
approximation guarantees. 
We give a $2.2975$-approximation algorithm in the standard polynomial-scenario model
and an algorithm with an expected per-scenario $2.4957$-approximation guarantee, which is applicable
to the more general black-box distribution model. 

\end{abstract}

\section{Introduction}
\label{sec:intro}
Stochastic optimization attempts to model uncertainty in the
input data via probabilistic
modeling of future information. It originated in the work
of Beale \cite{Beale55} and Dantzig \cite{Dantzig55} five
decades ago, and has found application in several areas
of optimization. There has been a flurry of
algorithmic activity over the last decade in this field, especially from the viewpoint of
approximation algorithms; see
the survey \cite{SwamyS06} for a thorough discussion of
this area. 

In this work, we present improved approximation algorithms for
various basic problems in stochastic optimization. We start
by recalling the widely-used \textit{$2$-stage recourse model}
\cite{SwamyS06}. 
Information about the input instance is revealed in two stages here. 
In the first, we
are given access to a distribution $D$ over possible realizations
of future data, each such realization called a \textit{scenario};
given $D$, we can commit to an anticipatory part $x$ of the total 
solution, which costs us $c(x)$. In the second stage, a scenario
$A$ is sampled from $D$ and given to us, specifying the complete
instance. We may then augment $x$ by taking recourse actions
$y_A$ that cost us the additional amount of 
$f_A(x,y_A)$ in order to construct a feasible 
solution for the
complete instance. The algorithmic goal is to construct 
$x$ efficiently, as well as $y_A$ efficiently (precomputed
for all $A$ if possible, or computed when $A$ is revealed to us), in order
to minimize the total expected cost, $c(x) + \E_A[f_A(x,y_A)]$.
(In the case of randomized algorithms, we further take the
expectation over the random choices of the algorithm.) This is
the basic cost-model. We will also study ``risk-averse'' relatives
of this expectation-minimization version. 
There is a natural extension of the above to $k \geq 2$ stages;
see \cite{SwamyS05} for a nice motivating example for the case
where $k > 2$, and for the precise model. We present just those
details of this model that are relevant for our discussion, in
\S~\ref{sec:cip}. 

As an example, the $2$-stage version of set cover is as follows.
As usual, we have a finite ground set $X$ and a given family
of subsets $S_1, S_2, \ldots, S_m$ of $X$; the stochasticity
comes from the fact that the actual set of elements to be covered
could be a \textit{subset} of $X$, about which we only
have probabilistic information. As above, we can sample $D$ to
get an idea of this subset in stage I; we can also buy each $S_j$ for
some given cost $c_j$ in stage I. Of course, the catch is that
future costs are typically 
higher:
i.e., for all $j$ and $A$,
the cost $c_{A,j}$ of buying $S_j$ under scenario $A$ in stage II, could
be much more than $c_j$. This reflects the increased cost of
rapid-response, as opposed to the advance provisioning of
any set $S_j$. As in set cover, a feasible solution is a
collection of $\{S_j\}_{j=1}^m$ that covers all of the finally-revealed elements
$A$. Thus, we will choose some collection $x$ of sets 
in stage I by using the distributional information about $A$,
and then augment $x$ by choosing further sets $S_j$ in stage II when
we know $A$. One basic goal is to minimize the total expected
cost of the two stages. 

How is $D$ specified? As mentioned in \cite{SwamyS06}, there has been 
recent work in algorithms where the data (e.g., demands) come from 
a product of \textit{independent},
explicitly-given distributions (see, e.g., the discussions in
\cite{MohringSU99,ImmorlicaKMM04,DeanGV04}). 
One major advantage here is that it
can succinctly capture even exponentially many scenarios. However,
just as in 
\cite{RaviS04,ShmoysS04,GuptaPRS11,SwamyS05,SwamyS06}, we are interested in dealing with correlations
that arise in the data (e.g., correlations due to geographic
proximity of clients), and hence will not deal with such \textit{independent
activation} models here. So, our general definition is where
we are given access to a black-box \rev{that} can generate samples
according to $D$. Alternatively, we could be explicitly given
the list of scenarios and their respective probabilities. In
this case, algorithms that run in time polynomial in the other
input parameters naturally require that the total number of
scenarios be polynomially bounded. A natural question to ask
is: can we ``reduce'' the former model to the latter, by taking
some polynomial number of scenarios from the black-box, and 
constructing an explicit list of scenarios using their
empirical probabilities? Indeed, this \textit{sample-average
approximation} method is widely used in practice: 
see, e.g., \cite{LinderothSW04,VerweijAKNS03}. The work of 
\cite{ShmoysS04,CharikarCP05,SoZY06} has shown that we can
essentially reduce the black-box model to the polynomial-scenario
model for the case of $k = 2$ stages,
by a careful usage of sampling for the problems
we study here: the error introduced by the sampling will
translate to a multiplicative $(1 + \epsilon)$ factor in
the approximation guarantee, where $\epsilon$ can be made
as small as any desired inverse-polynomial of the input size.
We will define the $k$-stage model using the
details relevant to us in \S~\ref{sec:cip}; in 
\S~\ref{sec:facloc-exp}, where we only deal with
$k = 2$ stages, we will present our algorithms in the polynomial-scenario model.
As it is discussed below, some of the presented algorithms are compatible
with the reduction from the black-box model and hence provide more general results.

Our results are as follows. We consider the $k$-stage model
in \S~\ref{sec:cip}; all the problems and results here,
as well as earlier work on these problems, is for arbitrary
constants $k$. The \textit{boosted sampling} approach of
\cite{GuptaPRS11} leads to approximation guarantees that are
exponential for $k$ for problems such as vertex cover (and
better approximations for the $k$-stage Steiner tree problem). This
was improved in \cite{SwamyS05}, leading to approximation guarantees
for vertex cover, set cover, and facility location that are
$k$ times their standard (non-stochastic) threshold:
for example, approximation guarantees of $2k + \epsilon$ and
$k \ln n$ for vertex cover and set cover respectively are developed
in \cite{SwamyS05}. Removing this dependence on $k$ is mentioned
as an open problem in \cite{SwamyS06}. We resolve this by
developing simple randomized approximation algorithms that yield,
for the family of covering integer programs, essentially the same
approximation algorithms as for their non-stochastic counterparts.
In particular, we get guarantees of $2 + \epsilon$ and $(1 + o(1)) \ln n$
respectively, for vertex cover and set cover. 
Except for a somewhat non-standard version of vertex cover studied in
\cite{RaviS04}, these are improvements even for the case of $k = 2$. 
Chaitanya Swamy (personal communication, June 2006) has informed us
that Kamesh Munagala has independently obtained the result for
$k$-stage vertex cover. 

Our next object of study is the classical facility location problem.
Recall that in the standard (non-stochastic) version of the facility
location problem, we are given
a set of clients $\mathcal{D}$ and a set of facilities $\mathcal{F}$.
The distance from client $j$ to facility $i$ is $c_{ij}$, and these values
form a metric. Given a cost $f_i$ for opening each facility $i$, we want 
to open some of the facilities, so that the sum of the opening costs and
the distances traveled by each client to its closest open facility, is
minimized. (As usual, all results and approximations translate without
any loss to the case where each client $i$ has a demand $d_i \geq 0$
indicating the number of customers at $i$, so we just discuss the case
$d_i \in \{0,1\}$ here.) Starting with \cite{ShmoysTA97}, there has been
a steady stream of constant-factor approximation algorithms for the
problem, drawing from and contributing to techniques in LP rounding,
primal-dual methods, local search, greedy algorithms etc. The current-best
lower and upper bounds on the problem's approximability are
$1.46 \rev{\ldots} $ \cite{GuhaK99} and $1.488$ \cite{Li11}.
The stochastic version of the facility location problem has also received 
a good deal of attention
in the case of $k = 2$ stages \cite{RaviS04,Swamy04,ShmoysS04}. 
Here, each facility $i$ can be opened at cost $f_i^I$ in stage I, and at
a cost $f_i^A$ when a scenario $A$ materializes in stage II; each
scenario $A$ is simply a subset of $\mathcal{D}$, indicating the actual
set of clients that need to be served under this scenario. 
The goal is to open some facilities
in stage I and then some in stage II; we develop improved approximation
algorithms in two settings here, as discussed next. 

Our first setting is the basic one of minimizing the total expected cost,
just as for covering problems. That is, we aim to minimize the expected
``client-connection costs + facility-opening cost'', where the expectation
is both over the random emergence of scenarios, and the internal random
choices of our algorithm. We first propose a $2.369$-approximation in
\S~\ref{sec:ufl-2stage-expmin}, improving upon the current best value of 2.976 implied by~\cite{Li11,ShmoysS04,SwamyS06}. Our approach here
crucially exploits an asymmetry in the facility-location algorithms of
\cite{JainMMSV02,JainMS02}. 
Next, in \S~\ref{sec:LP-rounding}, we give an LP-rounding algorithm that delivers solutions which expected cost  may be bounded
by $1.2707 \cdot C^* + 2.4061 \cdot F^*$, where $C^*$ and $F^*$
are, respectively, the connection and the facility opening costs of the initial fractional solution.
Finally, in \S~\ref{sec:ufl-combined}, we combine our two algorithms for stochastic facility location
to obtain an even better approximation guarantee. 
Namely, we prove that the better of the two solutions is expected to have cost
at most $2.2975$ times the cost of the optimal fractional solution.

Our second setting
involves an additional ingredient of ``risk-aversion'', various
facets of which have been studied in works including
\cite{GuptaRS07,DhamdhereGRS05,SoZY06}. The motivation here is
that a user may be risk-averse, and may not want to end up paying
much if (what was perceived to be) a low-probability or low-cost scenario
emerges: overall expectation-minimization alone may not suffice. Therefore,
as in \cite{GuptaRS07}, we aim for ``local'' algorithms: those where
for each scenario $A$, its expected final (rounded)
cost is at most some guaranteed (constant) factor 
times its fractional counterpart $Val_A$. Such a result is very useful
since it allows the inclusion of budgets to the various scenarios, 
and to either prove that these are infeasible, or to come within a constant factor
of each particular budget in expectation \cite{GuptaRS07}. 

In \S~\ref{sec:per-scenario} we give a new randomized rounding algorithm
and prove that it returns a solution with expected
cost in scenario $A$ is at most $2.4957$ times its fractional counterpart 
$Val_A = \sum_{i \in \F} (f_i^I y^*_i + f_i^A y^*_{A,i}) + \sum_{j\in A}\sum_{i\in \F} c_{ij}x^*_{A,ij}$,
improving on the $3.225 \cdot Val_A$ bound of \cite{Swamy04} and the $3.095 \cdot Val_A$
bound form an earlier version of this paper \cite{Srinivasan07}.
The algorithm is analized with respect to the fractional solution to the LP relaxation and the analysis
only uses a comparison with cost of parts of the primal solution, which makes the algorithm compatible with the reduction from the black-box model to the polynomial scenario model. 
Note, however, that the above mentioned budget constraints may only be inserted for explicitly speciffied scenarios.
In fact it was shown in~\cite{Swamy11} that obtaining budgeted risk-averse solutions in the black-box modes is not possible.

Finally, in \S\ref{sec:strict-per-scenario} we briefly discuss an even more risk-averse setting,
namely one with deterministic per-scenario guarantee. We note that the the algorithm form \S~\ref{sec:per-scenario}
actually also provides deterministic per-scenario upper bounds on the connection cost.

Thus, we present improved approximation algorithms in stochastic
optimization, both for two stages and multiple stages, based on
LP-rounding. 

\section{Multi-stage covering problems}
\label{sec:cip}

We show that general stochastic $k$-stage covering integer programs
(those with all coefficients being non-negative, and with the
variables $x_j$ allowed to be arbitrary non-negative integers)
admit essentially the same approximation ratios as their 
usual (non-stochastic)
counterparts. The model is that
we can ``buy'' each $x_j$ in $k$ stages: the final value of
$x_j$ is the sum of all values bought for it. 
We also show that $k$-stage vertex cover can be
approximated to within $2 + \epsilon$; similarly, $k$-stage
set cover with each element of the universe contained in at most
$b$ of the given sets, can be approximated to within $b + \epsilon$.

\noindent
\textbf{Model:} 
The general model for $k$-stage stochastic optimization
is that we have stochastic information about the world
in the form of a $k$-level tree with leaves containing complete information about the instance.
The revelation of the data to the algorithm corresponds to traversing
this tree from the root node toward the leaves. 
This traversal is a stochastic process
that is specified by a probability distritution (for the choice of child to move to)
defined for every internal node of the tree.
The algorithm makes irrevocable decisions
along this path traversal.

Since we consider only a constant number of stages $k$,
the polynomial-scenario estimation of a distribution accessable via a blackbox
is also possible: only that the degree of the polynomial depends linearly on $k$.
Given the results of \cite{SwamyS05}, solving a $k$-stage covering integer program (CIP)
reduces to solving polynomial-sized tree-scenario problems (for constant $k$). 
In the following we will present algorithms assuming that the scenario tree is given as an input.

Given a scenario tree with transition probabilities for edges,
we can solve a natural LP-relaxation of the studied covering problem
that has variables for decisions at each node of the tree.
We study algorithms that have access to the fractional solution
and produce an integral solution for the scenario that materializes.
The quality of the produced integral solution is analyzed with respect
to the fractional solution in this particular scenario.
We obtain that for any specific scenario from the input scenario tree
the obtained integral solution is feasible with high probability,
moreover the expected cost of the integral solution is bounded with respect
to the cost of the fractional solution in this particular scenario.

We note that our approach relies on randomization, which is in contrast to
the deterministic algorithms proposed, e.g, in \cite{SwamyS05}. Notably,
the fractional solution we round may itself be expensive for some scenarios
that occur with low probability, and hence our solution for such scenarios
may be expensive in comparison to a best possible solution for this very scenario.
What we obtain is strong bounds on the expected cost, when the expectation
is with respect to two types of randomness: the randomness of scenario arrival and the randomness of the algorithm.

Unlike with the expected cost, the bound on the probability of the correctness of the solution
is independent from the fractional solution. For every scenario the fractional solution must be feasible for the scenario specific constraints, which allows us to prove that in any single scenario the probability of failure is low. Nevertheless, this again is not a standard bound on the probability over the randomness of the algorithm that all scenarios are satisfied, which would perhaps allow for a derandomization of the algorithm. The algorithms presented in this paper have substantially improved approximation ratios, but this is at the cost of the solutions being inherently randomized.

In order to simplify the presentation, we restrict our attention
to algorithms that only use the part of the fractional solution corresponding
to the already-visited nodes of the tree (on the path from the root ot the current node).
Such an algorithm can be seen as an \emph{online} algorithm that reacts 
to the piece of information revealed to the algorithm at the current node.

We can thus model a $k$-stage stochastic 
covering integer program (CIP) for our purposes as follows.
There is a \textit{hidden} covering problem ``minimize $c^T \cdot x$ subject
to $Ax \geq b$ and all variables in $x$ being non-negative integers''. 
For notational convenience for the set-cover problem, we let $n$ be the
number of rows of $A$; also, the variables in $x$ are indexed as
$x_{j,\ell}$, where $1 \leq j \leq m$ and $1 \leq \ell \leq k$.
This program, as well as a feasible \textit{fractional} solution $x^*$ 
for it, are revealed to us in $k$ stages as follows. 
In each stage $\ell$ ($1 \leq \ell \leq k$), we are given the $\ell$th-stage 
fractional values $\{x^*_{j,\ell}: ~1 \leq j \leq m\}$ of the 
variables, along with their columns in the coefficient matrix $A$,
and their coefficient in the objective function $c$. 
Given some values like this, we need to round them 
right away at stage $\ell$ using randomization if necessary, 
\textit{irrevocably}. The goal is to develop such a
rounded vector $\{y_{j,\ell}: ~1 \leq j \leq m, 1 \leq \ell \leq k\}$ that 
satisfies the
constraints $Ay \geq b$, and whose (expected) approximation ratio
$c^T \cdot y / c^T \cdot x^*$ is small. 
Our results here are summarized as follows:
\begin{theorem}
\label{thm:2stage-cip}
We obtain randomized $\lambda$-approximation algorithms for
$k$-stage stochastic CIPs for arbitrary fixed $k$, 
with values of $\lambda$ as follows. (The running time is polynomial
for any fixed $k$, and $\lambda$ is independent of $k$.)
(i) For general CIPs, with the linear system scaled so that
all entries of the matrix $A$ lie in $[0,1]$ and 
$B =  \min_{i: b_i \geq 1} b_i $, we have
$\lambda = 1 + O(\max\{(\ln n)/B, \sqrt{(\ln n)/B}\})$.
(ii) For set cover with element-degree (maximum number of given sets
containing any element of the ground set) at most $b$,
we have $\lambda = b + \epsilon$, where $\epsilon$ can be 
$N^{-C}$ with $N$ being the input-size and $C > 0$ being any constant. 
(For instance, $b = 2$ for vertex cover, where
an edge can be covered only by its two end-points.)
\end{theorem}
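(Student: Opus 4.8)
The plan is to handle the two parts by reducing the $k$-stage rounding problem to a single application of a known (non-stochastic) rounding scheme, paying only a $(1+\epsilon)$ loss. The key observation is that the fractional solution $x^*$ is fixed (revealed incrementally, but fixed) and already feasible, so $Ax^* \ge b$ holds; thus the quantity we must ``cover'' is determined. First I would describe the online/streaming view: as each stage $\ell$ reveals $\{x^*_{j,\ell}\}$, we perform an independent randomized rounding step that commits to $y_{j,\ell}$. The natural scheme is randomized rounding with an $\alpha$-scaling (set $y_{j,\ell}$ to be roughly $\lceil \alpha x^*_{j,\ell}\rceil$-type values, or a randomized version thereof), exactly as in the standard CIP rounding of Srinivasan-type arguments. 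Because the rounding in stage $\ell$ depends only on $x^*_{j,\ell}$ and not on future stages, the online constraint is automatically satisfied; and because the rounding steps across stages are mutually independent, the final vector $y = \sum_\ell y_{\cdot,\ell}$ behaves like a sum of independent contributions summing in expectation to (a scaled copy of) $x^*$.

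For part (i), the core step is a concentration argument on each constraint row $i$: $(Ay)_i = \sum_{j,\ell} A_{ij} y_{j,\ell}$ is a sum of independent bounded random variables (bounded since $A_{ij}\in[0,1]$ and each $y_{j,\ell}$ is controlled), with mean at least $\alpha b_i \ge \alpha B$ on the rows that matter. A Chernoff/Angluin--Valiant bound then gives failure probability per row that is polynomially small once $\alpha = 1 + \Theta(\max\{(\ln n)/B, \sqrt{(\ln n)/B}\})$, and a union bound over the $n$ rows finishes feasibility with high probability; the expected objective is $\alpha \cdot c^T x^*$ by linearity, giving $\lambda = \alpha$. I would note that the analysis is per-scenario: for a fixed leaf of the scenario tree, the revealed constraints are exactly the rows of some $Ax\ge b$, and $x^*$ restricted to the root-to-leaf path is fractionally feasible for them, so nothing about the tree structure beyond ``$x^*$ is feasible along this path'' is used — this is what makes it compatible with the sampling reduction of \cite{SwamyS05} for constant $k$.

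For part (ii), set cover with element-degree at most $b$, I would instead use a $b$-approximate rounding idea: an element $e$ appears in at most $b$ sets, and the fractional cover assigns total mass $\ge 1$ to those sets across all stages, so some set covering $e$ has fractional mass $\ge 1/b$. Scaling by $b$ and rounding up (or a randomized threshold rounding at level $1/b$, boosted by $O(\log N)$ independent repetitions to drive the per-scenario failure probability below $N^{-C'}$ so a union bound over the at-most-$N$ elements of the materialized scenario succeeds) yields a feasible integral cover whose expected cost is $(b+\epsilon)$ times fractional; choosing the number of repetitions and the scaling slack appropriately makes $\epsilon = N^{-C}$. Again independence across stages and dependence only on already-revealed values preserve the online property.

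The main obstacle I anticipate is \emph{not} the concentration bookkeeping but getting the per-stage rounding to simultaneously (a) be truly irrevocable and stage-local, (b) have the correct marginals so that $\E[y] = \alpha x^*$ exactly (or close enough), and (c) keep the summands bounded enough for Chernoff to bite even when individual $x^*_{j,\ell}$ are large (the standard CIP trick of splitting a large fractional value into an integer part plus a small fractional remainder, rounding only the remainder, has to be done within each stage without knowledge of other stages). Verifying that this split-and-round-the-remainder procedure composes correctly across the $k$ stages — in particular that the integer parts contributed in different stages simply add and the remainders remain independent and small — is the delicate point; once that is in hand, parts (i) and (ii) follow from the respective tail bounds and union bounds as sketched.
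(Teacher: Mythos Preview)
Your treatment of part~(i) is essentially the paper's proof: scale each $x^*_{j,\ell}$ by $\lambda$, round independently so that $\E[y_{j,\ell}]=\lambda x^*_{j,\ell}$, apply a Chernoff lower-tail bound per row and union-bound over the $n$ rows. Your remarks about the per-scenario analysis and about handling large $x^*_{j,\ell}$ by separating integer and fractional parts are also in line with what the paper does (implicitly, via the floor/ceiling rounding).

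Part~(ii), however, has a real gap. Neither alternative you propose achieves $b+\epsilon$. ``Scaling by $b$ and rounding up'' is not well-defined online: you correctly observe that some set $j$ covering $e$ has $\sum_\ell x^*_{j,\ell}\ge 1/b$, but you do not know \emph{which} $j$ this is, and within a fixed $j$ you see the stage values $x^*_{j,1},\ldots,x^*_{j,k}$ one at a time and must commit irrevocably; any deterministic ``round up when the running scaled sum crosses~1'' rule pays an additive $+1$ per variable in the worst case, which can swamp the $b$-factor. Your second alternative, independent randomized rounding boosted by $O(\log N)$ repetitions, multiplies the expected cost by $\Theta(\log N)$ and yields $O(b\log N)$, not $b+\epsilon$: if each of the $bk$ relevant stage-values equals $1/(bk)$, the single-trial failure probability for an element is $(1-1/k)^{bk}\approx e^{-b}$, a constant, so $\Theta(\log N)$ repetitions are genuinely needed for the union bound.

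What the paper does instead is design an \emph{online dependent} rounding procedure $\mathcal{R}$ applied independently to each set $j$'s scaled stage-vector $(\min\{b x^*_{j,1},1\},\ldots,\min\{b x^*_{j,k},1\})$, with three properties: (P1) it commits to $Z_\ell\in\{0,1\}$ as soon as $z_\ell$ arrives; (P2) $\E[Z_\ell]\le z_\ell$; and crucially (P3) if $\sum_\ell z_\ell\ge 1$ then some $Z_\ell=1$ \emph{with probability one}. Property~(P3) is exactly what replaces your $O(\log N)$ repetitions: since any element lies in at most $b$ sets and their total fractional mass is $\ge 1$, at least one of those sets has scaled mass $\ge 1$ across stages, and (P3) guarantees it is chosen deterministically. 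The construction of $\mathcal{R}$ (round $z_1$ with probability $z_1$; thereafter, conditioned on all previous $Z_{j}=0$, set $Z_i=1$ with probability $z_i/(1-\sum_{j<i}z_j)$, or with probability~1 once the running sum reaches~1) is the missing idea in your proposal.
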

The ``$+ \epsilon$'' term
appears in part (ii) since the fractional solution obtained by
\cite{SwamyS05} is an $(1 + \epsilon)$--approximation to the actual LP.
We do not mention this term in part (i), by absorbing it into the
big-Oh notation. 
The two parts of this theorem are proved next. 

\subsection{A simple scheme for general CIPs}
\label{sec:cip-proof}
We use our $k$-stage model to prove Theorem~\ref{thm:2stage-cip}(i). 
We show that a simple randomized rounding approach along 
the lines of
\cite{raghavan-thompson} works here: for a suitable $\lambda \geq 1$
and independently for all $(j,\ell)$, set $x'_{j,\ell} = 
\lambda \cdot x^*_{j,\ell}$, and define the rounded value
$y_{j,\ell}$ to be $\lceil x'_{j,\ell} \rceil$ with probability 
$x'_{j,\ell} - \lfloor x'_{j,\ell} \rfloor$, and 
to be $\lfloor x'_{j,\ell} \rfloor$ with the
complementary probability of $1 - (x'_{j,\ell} - 
\lfloor x'_{j,\ell} \rfloor)$. Note that
$\E[y_{j,\ell}] = x'_{j,\ell}$. We will now show that for a suitable,
``not very large'' choice of $\lambda$, with high probability, 
all constraints will be satisfied and $c^T \cdot y$ is about
$\lambda \cdot (c^T \cdot x^*)$. 

The proof is standard, and we will illustrate it for set cover.
Note that in this case, a set of at most $n$ elements need to be
covered in the end. 
Set $\lambda = \ln n + \psi(n)$ for any arbitrarily slowly growing
function $\psi(n)$ of $n$ such that 
$\lim_{n \rightarrow \infty} \psi(n) = \infty$; run the randomized
rounding scheme described in the previous paragraph. 
Consider any finally revealed element $i$, and
let $E_i$ be the event that our rounding leaves this element
uncovered. Let $A_i$ be the family of sets in the given set-cover
instance that contain $i$; note that the fractional solution satisfies
$\sum_{j \in A_i, \ell} x^*_{j,\ell} \geq 1$. Now, if $x'_{j,\ell} \geq 1$
for some pair $(j \in A_i, ~\ell)$, then $y_{j,\ell} \geq 1$, and so, 
$i$ is guaranteed to be covered. Otherwise, 
\[ \Pr[E_i] = \prod_{j \in A_i, \ell} \Pr[y_{j,\ell} = 0]
= \prod_{j \in A_i, \ell} (1 - x'_{j,\ell}) \leq
\exp(-\sum_{j \in A_i, \ell} \lambda \cdot x^*_{j,\ell}) \leq
\exp(-\lambda) = \exp(-\psi(n))/n = o(1/n). \]
Thus, applying a union bound over the (at most $n$) finally-revealed
elements $i$, we see that $\Pr[\bigwedge_i \overline{E_i}] = 1 - o(1)$.
So, 
\[ \E[c^T \cdot y \bigm| \bigwedge_i \overline{E_i}] \leq
\frac{\E[c^T \cdot y]}{\Pr[\bigwedge_i \overline{E_i}]} =
\frac{\lambda \cdot (c^T \cdot x^*)}{\Pr[\bigwedge_i \overline{E_i}]} =
\frac{\lambda \cdot (c^T \cdot x^*)}{1 - o(1)} =
(1 + o(1)) \cdot \lambda \cdot (c^T \cdot x^*); \]
i.e., we get an $(1 + o(1)) \cdot \ln n$--approximation. Alternatively,
since $c^T \cdot y$ is a sum of independent random variables, we can
show that it is not much more than its mean, 
$\lambda \cdot (c^T \cdot x^*)$, with high probability.

The analysis for general CIPs is similar; we observe
that for any row $i$ of the constraint system, $\E[(Ay)_i] = \lambda b_i$,
use a Chernoff lower-tail bound to show that the ``bad'' event
$E_i$ that $(Ay)_i < b_i$ happens with probability noticeably smaller than
$1/n$, and apply a union bound over all $n$. Choosing 
$\lambda$ as in Theorem~\ref{thm:2stage-cip}(i) 
suffices for such an analysis; see, e.g., \cite{naor-roth}. 

\subsection{Vertex cover, and set cover with small degree}
\label{sec:vertcov-depround}
We now use a type of dependent rounding to prove 
Theorem~\ref{thm:2stage-cip}(ii). We present the case of vertex
cover ($b = 2$), and then note the small modification needed for
the case of general $b$. Note that our model  becomes the 
following for (weighted) vertex cover. There is a hidden 
undirected graph $G = (V,E)$.
The following happens for each vertex $v \in V$. We are revealed
$k$ fractional values $x^*_{v,1}, x^*_{v,2}, \ldots, x^*_{v,k}$ for $v$
one-by-one, along with the corresponding weights for $v$ (in the
objective function), $c_{v,1}, c_{v,2}, \ldots, c_{v,k}$. We aim for
a rounding $\{y_{v,\ell}\}$ that covers all edges in $E$, 
whose objective-function value $\sum_{\ell,v} c_{v,\ell} y_{v,\ell}$
is at most twice its fractional counterpart,
$\sum_{\ell,v} c_{v,\ell} x^*_{v,\ell}$.
Note that the fractional solution satisfies
\begin{equation}
\label{eqn:vertcov}
\forall (u,v) \in E, ~(\sum_{\ell=1}^k x^*_{u,\ell}) + 
(\sum_{\ell=1}^k x^*_{v,\ell}) \geq 1.
\end{equation}

Now, given a sequence $z = (z_1, z_2, \ldots, z_k)$ of values that lie
in $[0,1]$ and arrive \textit{online}, suppose we can define an efficient 
randomized procedure $\mathcal{R}$, which has the following properties:
\begin{description}
\item[(P1)] as soon as $\mathcal{R}$ gets a value $z_i$, it rounds it to
some $Z_i \in \{0,1\}$ ($\mathcal{R}$ may use the knowledge of
the values $\{z_j, Z_j: ~j < i\}$ in this process);
\item[(P2)] $\E[Z_i] \leq z_i$; and
\item[(P3)] if $\sum_i z_i \geq 1$, then at least one $Z_i$ is one
with probability one.
\end{description}
Then, we can simply apply procedure $\mathcal{R}$ independently
for each vertex $v$, to the vector $z(v)$ of \textit{scaled} values
$(\min\{2 \cdot x^*_{v,1}, 1\}, \min\{2 \cdot x^*_{v,2},1\}, \ldots, 
\min\{2 \cdot x^*_{v,k},1\})$. Property
(P2) shows that the expected value of the final solution is at most
$2 \cdot \sum_{\ell,v} c_{v,\ell} x^*_{v,\ell}$; also, since
(\ref{eqn:vertcov}) shows that for any edge $(u,v)$, at least
one of the two sums $2 \cdot \sum_{\ell} x^*_{u,\ell}$ and
$2 \cdot \sum_{\ell} x^*_{v,\ell}$ is at least $1$, property (P3) 
guarantees that each 
edge $(u,v)$ is covered with probability one. 
So, the only task is to define function $\mathcal{R}$.

For a sequence $z = (z_1, z_2, \ldots, z_k)$ arriving online, 
$\mathcal{R}$ proceeds as follows. Given $z_1$, it rounds $z_1$ to
$Z_1 = 1$ with probability $z_1$, and to
$Z_1 = 0$ with probability $1 - z_1$. 
Next, given $z_i$ for $i > 1$:

\noindent \textbf{Case I: $Z_j = 1$ for some $j < i$.} In this case,
just set $Z_i$ to $0$.

\noindent \textbf{Case II(a): $Z_j = 0$ for all $j < i$, and
$\sum_{\ell=1}^i z_{\ell} \geq 1$.} In this case,
just set $Z_i$ to $1$.

\noindent \textbf{Case II(b): $Z_j = 0$ for all $j < i$, and
$\sum_{\ell=1}^i z_{\ell} < 1$.} 
In this case, set $Z_i = 1$ with probability 
$\frac{z_i}{1 - \sum_{j=1}^{i-1} z_{j}}$, and
set $Z_i = 0$ with the complementary probability.

It is clear that property (P1) of $\mathcal{R}$ holds.
Let us next prove property (P3). 
Assume that for some $t$, $\sum_{i=1}^t z_i \geq 1$ and
$\sum_{i=1}^{t-1} z_i < 1$. It suffices to prove that
$\Pr[\exists i \leq t: Z_i = 1] = 1$. We have
\begin{eqnarray*}
\Pr[\exists i \leq t: Z_i = 1] & = &
\Pr[\exists i < t: Z_i = 1] +
\Pr[\not\exists i < t: Z_i = 1] \cdot 
\Pr[(Z_t = 1) \bigm| (Z_1 = Z_2 = \cdots Z_{t-1} = 0)] \\
& \geq &
\Pr[(Z_t = 1) \bigm| (Z_1 = Z_2 = \cdots Z_{t-1} = 0)] \\
& = & 1,
\end{eqnarray*}
from case II(a). This proves property (P3). 

We next consider property (P2), which is immediate for $i = 1$. 
If there is some $t$ such that $\sum_{i=1}^t z_i \geq 1$, take
$t$ to be the smallest such index; if there is no such $t$, define
$t = k$. The required bound of (P2), $\E[Z_i] \leq z_i$, clearly
holds for all $i > t$, since by (P3) and Case I, we have
$\E[Z_i] = 0$ for all such $i$. So suppose $i \leq t$. Note from
case II(a) that
\[ \forall j < t, ~\Pr[(Z_j = 1) \bigm| (Z_1 = Z_2 = \cdots Z_{j-1} = 0)] 
= \frac{z_j}{1 - (z_1 + z_2 + \cdots + z_{j-1})}. \]
Note from Case I that
no two $Z_j$ can both be $1$. Thus, for $1 < i \leq t$, 
\begin{eqnarray}
\Pr[Z_i = 1] & = & \Pr[(Z_1 = Z_2 = \cdots Z_{i-1} = 0) \wedge (Z_i = 1)] \nonumber \\
& = & \Pr[Z_1 = Z_2 = \cdots Z_{i-1} = 0] \cdot 
\Pr[(Z_i = 1) \bigm| (Z_1 = Z_2 = \cdots Z_{i-1} = 0)] \nonumber \\
& = & \left(\prod_{j=1}^{i-1} 
(1 - \frac{z_j}{1 - (z_1 + z_2 + \cdots + z_{j-1})})\right) \cdot
\Pr[(Z_i = 1) \bigm| (Z_1 = Z_2 = \cdots Z_{i-1} = 0)] \nonumber \\
& = & (1 - (z_1 + z_2 + \cdots + z_{i-1})) \cdot
\Pr[(Z_i = 1) \bigm| (Z_1 = Z_2 = \cdots Z_{i-1} = 0)].
\label{eqn:p2}
\end{eqnarray}
 From Cases II(a) and II(b),
$\Pr[(Z_i = 1) \bigm| (Z_1 = Z_2 = \cdots Z_{i-1} = 0)]$ is $1$
if $i = t$ \textit{and} $z_1 + z_2 + \cdots + z_t \geq 1$, and is 
$\frac{z_i}{1 - (z_1 + z_2 + \cdots + z_{i-1})}$ otherwise; in
either case, we can verify from (\ref{eqn:p2}) that
$\Pr[Z_i = 1] \leq z_i$, proving (P2).

Similarly, for $k$-stage set cover with each element of the universe 
contained in at most $b$ of the given sets, we construct 
$z'(v) = (\min\{b \cdot x^*_{v,1}, 1\}, 
\min\{b \cdot x^*_{v,2}, 1\}, 
\ldots, \min\{b \cdot x^*_{v,k}, 1\})$ and apply $\mathcal{R}$. By the same
analysis as above, all elements are covered with probability $1$,
and the expected objective function value is at most 
$b \cdot \sum_{\ell,v} c_{v,\ell} x^*_{v,\ell}$.

\noindent \textbf{Tail bounds:} It is also easy to show using
\cite{ps:edge-col} that
in addition to its expectation being at most $b$ times the fractional
value, $c^T \cdot y$ has a
Chernoff-type upper bound on deviations above its mean. 

%

\section{Facility Location Problems}
\label{sec:facloc-exp}
We consider three variants of facility location in this section
(i.e., the \emph{standard}, the \emph{expected per-scenario guarantee}, and the \emph{strict per-scenario guarantee} models, see \S~\ref{sec:intro} for definitions).
We start with a randomized primal-dual $2.369$-approximation algorithm in \S~\ref{sec:ufl-2stage-expmin}. 
Then, in \S~\ref{sec:LP-rounding} we give an LP-rounding algorithm (based on a dual bound on maximal connection cost) 
with a \emph{bifactor approximation guarantee} $(2.4061, 1.2707)$, i.e., one that delivers solutions with cost at most $1.2707$ times the fractional connection cost
plus $2.4061$ times the fractional facility opening cost. This algorithm is trivially a $2.4061$-approximation algorithm.
Next, in \S~\ref{sec:per-scenario}, we give a purely primal LP-rounding algorithm which has a $2.4975$-approximation guarantee
in the \emph{expected per-scenario} sense.

In \S~\ref{sec:ufl-combined} we further exploit the asymmetry of the analysis of the first LP-rounding algorithm and combine
it with the algorithm from \S~\ref{sec:ufl-2stage-expmin}. As a result we obtain an improved approximation guarantee
of $2.2975$ for the standard setting of $2$-stage stochastic uncapacitated facility location, where the goal is to optimize
the expected total cost (across scenarios). 

The second LP-rounding algorithm not only has the advantage of providing solutions where the expected cost in each scenario is bounded: 
it also can be applied in the black-box model. Finally, in \S~\ref{sec:strict-per-scenario} we note that one can obtain
an algorithm that deterministically obeys certain \emph{a priori} per-scenario budget constraints by splitting a single two-stage instance 
into two single stage instances.

We will consider just the case 
of $0-1$ demands. As usual, our algorithms directly generalize to the 
case of arbitrary demands with no loss in approximation guarantee. 

\subsection{General setting}
Let the set of facilities be $\mathcal{F}$, and the set of all
possible clients be $\mathcal{D}$. From the
results of \cite{ShmoysS04,CharikarCP05,SwamyS05,SoZY06}, we may assume
that we are given:
(i) $m$ scenarios (indexed by $A$), each being a subset of $\mathcal{D}$, and
(ii) an $(1 + \epsilon)$-approximate solution $(x,y)$ to the following
standard LP relaxation of the problem (as in Theorem~\ref{thm:2stage-cip},
$\epsilon$ can be made inverse-polynomially small, and will henceforth be
ignored):
\[ \mbox{minimize} ~
\sum_{i \in \mathcal{F}} f_i^I y_i +
\sum_A p_A (\sum_i f_i^A y_{A,i} + \sum_{j \in A} \sum_i c_{ij} x_{A,ij})~
\mbox{subject to} \]
\begin{eqnarray}
\sum_i x_{A,ij} & \geq & 1 ~~\forall A ~\forall j \in A; 
\label{eqn:facloc-assign} \\
x_{A,ij} & \leq & y_i + y_{A,i} ~~\forall i ~\forall A ~\forall j \in A; 
\label{eqn:facloc-xleqy} \\
x_{A,ij}, y_i, y_{A,i} & \geq & 0 ~~\forall i ~\forall A ~\forall j \in A.
\nonumber
\end{eqnarray}
Here, $f_i^I$ and $f_i^A$ are the costs of opening facility $i$ in stage
I and in stage-II scenario $A$, respectively; $c_{ij}$ is the cost of
connecting client $j$ to $i$. Each given scenario $A$ materializes
with probability $p_A$. Variables $y_i$ and $y_{A,i}$ are the
extents to which facility $i$ is opened in stage I and in stage-II 
scenario $A$, respectively; $x_{A,ij}$ is the extent to which
$j$ is connected to $i$ in scenario $A$.

For all $i$, $A$, and $j \in A$ such that $x_{A,ij} > 0$, write
$x_{A,ij} = x_{A,ij}^{(1)} + x_{A,ij}^{(2)}$, where
\begin{equation}
\label{eqn:x-decomp}
x_{A,ij}^{(1)} = x_{A,ij} \cdot \frac{y_i}{y_i + y_{A,i}} 
\mbox{ and } 
x_{A,ij}^{(2)} = x_{A,ij} \cdot \frac{y_{A,i}}{y_i + y_{A,i}}.
\end{equation}
Extending this definition, if $j \in A$ and $x_{A,ij} = 0$, we
define $x_{A,ij}^{(1)} = x_{A,ij}^{(2)} = 0$. 
Note from (\ref{eqn:facloc-xleqy}) that
$x_{A,ij}^{(1)} \leq y_i$ and $x_{A,ij}^{(2)} \leq y_{A,i}$.

The idea is, as in \cite{ShmoysS04}, to satisfy some client-scenario 
pairs $(j,A)$ in Stage I; the rest will be handled in Stage II.
This set of pairs are chosen based on the values of $\sum_i x_{A,ij}^{(1)}$.
Our contribution is that we propose alternatives to the direct use of ``deterministic
thresholding'', which is to choose such Stage-I pairs as in \cite{ShmoysS04}
and use existing algorithms for the obtained subproblems.

In the first primal-dual algorithm we will employ a carefully-chosen \textit{randomized} thresholding.
As we will see, this randomized scheme also fits well with a basic 
\textit{asymmetry} in many known facility-location algorithms (in our case,
the ones in \cite{JainMMSV02,JainMS02}). 
In the LP-rounding algorithms presented later, we use a deterministic threshold: 
however, instead of individually solving the obtained subproblems, we 
perform a single rounding process. The selection of Stage I client-scenario pairs
is only used to guide the clustering of facilities, but we still allow a single client
to connect to a facility opened in either of the stages.

\subsection{Minimizing expected cost: a primal-dual algorithm}
\label{sec:ufl-2stage-expmin}
We now develop a $2.369$--approximation algorithm for minimizing the expected
total cost.  

Let $\alpha \in (0,1/2)$ be a constant
that will be chosen later. Pick a single random real $Z$ using the following
distribution that is a mixture of continuous and discrete: 
\begin{itemize}
\item with probability $\alpha/(1 - \alpha)$, let $Z := 1/2$;
\item with the complementary probability of $(1 - 2\alpha)/(1 - \alpha)$, 
let $Z$ be a random real chosen from the uniform distribution on
$[\alpha, 1 - \alpha]$.
\end{itemize}

The rounding for Stage I is as follows. For any pair $(j,A)$ with 
$j \in A$, define $r_{A,j}^{(1)}$ (the
extent to which $(j,A)$ is satisfied in Stage I) to be 
$\sum_i x_{A,ij}^{(1)}$; $(j,A)$ is
declared \textit{selected} iff 
\begin{equation}
\label{eqn:stage1-select}
Z \leq r_{A,j}^{(1)}.
\end{equation}
For the Stage I decisions, construct a facility-location instance
$\mathcal{I}$ with each selected pair $(j,A)$ having 
demand $p_A$ and each facility
$i$ having cost $f_i^I$, and solve it
using the approximation algorithm of \cite{JainMMSV02,JainMS02},
which is described in \cite{MahdianYZ05} and called the 
\textit{JMS Algorithm} in \cite{MahdianYZ05}. 
In Stage II, we round separately for each scenario $A$ as follows.
Construct a facility-location instance $\mathcal{I}_A$
with a unit-demand client for
each $j \in A$ such that $(j,A)$ was \textit{not} selected in Stage I;
each facility $i$ has cost $f_i^A$. Again use the JMS algorithm as
described in \cite{MahdianYZ05} to get an approximately optimal solution
for $\mathcal{I}_A$. 

\smallskip \noindent \textbf{Analysis:}
It is clear that in every scenario $A$, we satisfy all of its demands.
To analyze the expected cost of this solution (with the only randomness
being in the choice of $Z$), we start by constructing feasible fractional
solutions for the facility-location instances $\mathcal{I}$
and $\mathcal{I}_A$ (for all $A$). Condition on a fixed value for
$Z$. Let us first construct a feasible fractional solution $(\hat{x},
\hat{y})$ for the stage-I instance $\mathcal{I}$:
$\hat{y}_i = y_{i} / Z$ for all $i$, and
$\hat{x}_{A,ij} = x_{A,ij}^{(1)} / r_{A,j}^{(1)}$ for all
selected $(j,A)$ and all $i$. This is feasible since
$r_{A,j}^{(1)} \geq Z$. Thus, letting $S_{j,A}$ be the indicator
variable for $(j,A)$ being selected (which is a function of $Z$)
and recalling that each selected $(j,A)$ has demand $p_A$ in
$\mathcal{I}$, 
the total ``facility cost'' and ``connection cost'' of $(\hat{x}, \hat{y})$
are
\begin{equation}
\label{eqn:stageI-costs}
\sum_i \frac{y_{i} f_i^{I}}{Z}
\mbox{ and }
\sum_{j,A} p_A \cdot \frac{S_{j,A}}{r_{A,j}^{(1)}} \cdot \sum_i 
c_{ij} x_{A,ij}^{(1)},
\end{equation}
respectively.
Next consider any scenario $A$, and let us construct 
a feasible fractional solution $(x', y')$ for $\mathcal{I_A}$.
Define $r_{A,j}^{(2)} = \sum_i x_{A,ij}^{(2)}$. 
We may assume w.l.o.g.\ that equality
holds in (\ref{eqn:facloc-assign}); so, 
$r_{A,j}^{(2)} = 1 - r_{A,j}^{(1)}$. Thus, a
necessary condition for $(j,A)$ to \textit{not} be selected
in Stage I is
\begin{equation}
\label{eqn:stage2-select}
(1 - Z) \leq r_{A,j}^{(2)}.
\end{equation}
This is analogous to (\ref{eqn:stage1-select}), with $Z$ being replaced
by $1 - Z$. Thus, we can argue similarly as we did for $(\hat{x}, \hat{y})$ 
that
$y'_i = y_{A,i} / (1 - Z)$, 
$x'_{A,ij} = x_{A,ij}^{(2)} / r_{A,j}^{(2)}$ for all $(j,A)$ not selected
in Stage I, is a feasible fractional solution for $\mathcal{I_A}$. Since
all demands here are one, 
the total facility cost and connection cost of $(x', y')$
are
\begin{equation}
\label{eqn:stageII-costs}
\sum_i \frac{y_{A,i} f_i^A}{1 - Z}
\mbox{ and }
\sum_{j,A} \frac{1 - S_{j,A}}{r_{A,j}^{(2)}} \cdot \sum_i c_{ij} x_{A,ij}^{(2)}
\end{equation}
respectively.

Now, the key ``asymmetry'' property of the JMS algorithm is, as proven in
\cite{MahdianYZ05}, that it is a \textit{$(1.11, 1.78)$-bifactor 
approximation algorithm}: given an instance for which there is a fractional
solution with facility cost $F$ and connection cost $C$, it produces an
integral solution of cost at most $1.11 F + 1.78 C$. Therefore, 
from (\ref{eqn:stageI-costs}) and (\ref{eqn:stageII-costs}), and weighting
the latter by $p_A$, we see that given $Z$, the total final cost is
at most
\[ 1.11 \cdot \left[\sum_i \left(\frac{y_{i}f_i^I}{Z} + \sum_A p_A \cdot \frac{y_{A,i}f_i^A}{1 - Z}\right)\right] +
1.78 \cdot 
\sum_{j,A} p_A \cdot \left[\left(\frac{S_{j,A}}{r_{A,j}^{(1)}} \cdot \sum_i 
c_{ij} x_{A,ij}^{(1)}\right) + 
\left(\frac{1 - S_{j,A}}{r_{A,j}^{(2)}} \cdot \sum_i c_{ij} x_{A,ij}^{(2)}\right)\right]; \]
so, the expected final cost is at most
\begin{eqnarray}
& 1.11 \cdot \left[\sum_i (y_{i} \cdot \E[1/Z] + 
\sum_A p_A y_{A,i} \cdot \E[1/(1 - Z)])\right] + & \nonumber \\
& 1.78 \cdot 
\sum_{j,A} p_A \cdot \left[\left(\frac{\E[S_{j,A}]}{r_{A,j}^{(1)}} \cdot \sum_i 
c_{ij} x_{A,ij}^{(1)}\right) + 
\left(\frac{\E[1 - S_{j,A}]}{r_{A,j}^{(2)}} \cdot \sum_i c_{ij} x_{A,ij}^{(2)}\right)\right]. & \label{eqn:facloc-cost}
\end{eqnarray}

Note that $Z$ and $1 - Z$ have identical distributions. So,
\begin{equation}
\label{eqn:e1z}
\E[1/(1 - Z)] = \E[1/Z] = 
(\alpha/(1 - \alpha)) \cdot 2 + 
((1 - 2\alpha)/(1 - \alpha)) \cdot 
\frac{1}{1 - 2\alpha} \cdot \int_{z = \alpha}^{1 - \alpha} dz / z
= \frac{2 \alpha + \ln((1 - \alpha) / \alpha)}{1 - \alpha}.
\end{equation}
Let us next bound $\E[S_{j,A}]$. Recall 
(\ref{eqn:stage1-select}), and let $r$ denote $r_{A,j}^{(1)}$. If
$r < \alpha$, then $S_{j,A} = 0$; if $r \geq 1 - \alpha$, then
$S_{j,A} = 1$. Next suppose $\alpha \leq r < 1/2$. Then
$S_{j,A}$ can hold only if we chose to pick $Z$ at random from
$[\alpha, 1 - \alpha]$, and got $Z \leq r$; this happens with probability
$((1 - 2\alpha)/(1 - \alpha)) \cdot (r - \alpha) / (1 - 2\alpha) 
= (r - \alpha)/(1 - \alpha) \leq r/(1 - \alpha)$. Finally, if
$1/2 \leq r < (1 - \alpha)$,
\[ \E[S_{j,A}] = \alpha/(1 - \alpha) + ((1 - 2\alpha)/(1 - \alpha)) \cdot 
(r - \alpha) / (1 - 2\alpha) = r/(1 - \alpha). \]
Thus, in all cases we saw here, 
\begin{equation}
\label{eqn:esja}
\E[S_{j,A}] \leq r_{A,j}^{(1)} / (1 - \alpha). 
\end{equation}
Similarly, recalling (\ref{eqn:stage2-select}) and the fact that
$Z$ and $1 - Z$ have identical distributions, we get
\begin{equation}
\label{eqn:e1-sja}
\E[1 - S_{j,A}] \leq r_{A,j}^{(2)} / (1 - \alpha). 
\end{equation}

Plugging (\ref{eqn:e1z}), (\ref{eqn:esja}), and (\ref{eqn:e1-sja}) into
(\ref{eqn:facloc-cost}) and using the fact that 
$x_{A,ij} = x_{A,ij}^{(1)} + x_{A,ij}^{(2)}$, we see that
our expected approximation ratio is
\[ \max\left\{\frac{1.78}{1 - \alpha}, ~\frac{1.11(2 \alpha + \ln((1 - \alpha) / \alpha))}{1 - \alpha} \right\}. \]
Thus, a good choice of $\alpha$ is $0.2485$,
leading to an expected approximation ratio less than $2.369$.


\subsection{Minimizing expected cost: an LP-rounding algorithm}
\label{sec:LP-rounding}
Consider the following dual formulation of the $2$-stage stochastic facility location problem: 
\begin{eqnarray*}
 \mbox{maximize} \sum_{A} p_A (\sum_{j \in A} v_{j,A}) && 
~\mbox{subject to:}\\
   v_{j,A} - c_{ij} &\leq& w_{ij,A} ~~\forall i ~\forall A ~\forall j \in A\\
   \sum_{A} p_A (\sum_{j \in A} w_{ij,A}) &\leq& f_i^I ~~~~~\forall i ~\forall A ~\forall j \in A\\
   \sum_{j \in A} w_{ij,A} &\leq& f_i^A ~~~~~\forall i ~\forall A ~\forall j \in A\\
  w_{ij,A},v_{j,A} &\geq& 0 ~~~~~~~\forall i ~\forall A ~\forall j \in A.
\end{eqnarray*}

Let $(x^*,y^*)$ and $(v^*,w^*)$ be optimal solutions to the primal
and the dual programs, respectively. Note that by complementary 
slackness, we have
$c_{ij} \leq v_{j,A}$ if $x_{A,ij} > 0$. 

\paragraph{Algorithm.}

We now describe a randomized LP-rounding algorithm that transforms the fractional 
solution $(x^*,y^*)$ into an integral solution $(\hat{x},\hat{y})$ with bounded expected cost.
The expectation is over the random choices of the algorithm, but not over the
random choice of the scenario. Note that we need to decide the first stage entries of $\hat{y}$
not knowing $A$. W.l.o.g., we assume that no facility is fractionally opened in $(x^*,y^*)$ 
in both stages, i.e, for all $i$ we have $y^*_i = 0$ or for all $A$ $y^*_{A,i} = 0$.
To obtain this property it suffices to have two identical copies of each facility,
one for Stage I and one for Stage II.

Define $x_{A,ij}^{(1)}$, $x_{A,ij}^{(2)}$, and $r_{A,j}^{(1)}$ as before
and select all the $(j,A)$ pairs with $r_{A,j}^{(1)} \geq \frac{1}{2}$ (note that it is the standard deterministic tresholding selection method).
We will call such selected $(j,A)$ pairs \emph{first stage clustered}
and the remaining $(j,A)$ pairs \emph{second stage clustered}. Let $S$ denote the set of firs stage clustered $(j,A)$ pairs.

We will now scale the fractional solution by a factor of $2$.
Define $\overline{x}_{A,ij}^{(1)} = 2 \cdot x_{A,ij}^{(1)} $, $\overline{x}_{A,ij}^{(2)} = 2 \cdot x_{A,ij}^{(2)} $, $\overline{y}_i =  2 \cdot y^*_i $,
and $\overline{y}_{A,i} =  2 \cdot y^*_{A,i} $. Note that the scaled fractional solution
$(\overline{x},\overline{y})$ can have facilities with fractional opening of more than $1$.
For simplicity of the analysis, we do not round these facility-opening values to $1$, but rather split such facilities.
More precisely, we split each facility $i$ with fractional opening $\overline{y}_i > \overline{x}_{A,ij}^{(1)} > 0$ (or $\overline{y}_{A,i} > \overline{x}_{A,ij}^{(2)} > 0$) 
for some $(A, j)$ into $i'$ and $i''$, such that 
 $\overline{y}_{i'} = \overline{x}_{A,ij}^{(1)}$ and $\overline{y}_{i''} = \overline{y}_i - \overline{x}_{A,ij}^{(1)}$.
We also split facilities whose fractional opening exceeds one.
By splitting facilities we create another instance of the problem together with a fractional sollution, 
then we solve this modified instance and interpret the solution as a solution to the original problem in the natural way.
The technique of splitting facilities is precisely described in~\cite{Sviridenko02}.


Since we can split facilities, for each $(j,A) \in S$ we can assume that
there exists a subset of facilities $F_{(j,A)} \subseteq \F$, 
such that $\sum_{i \in  F_{(j,A)} } \overline{x}^{(1)}_{A,ij} = 1$,
and for each $i \in F_{(j,A)}$ we have $\overline{x}^{(1)}_{A,ij} = \overline{y}_i$.
Also for each $(j,A) \notin S$ we can assume that
there exists a subset of facilities $F_{(j,A)} \subseteq \F$,
such that $\sum_{i \in  F_{(j,A)} } \overline{x}^{(2)}_{A,ij} = 1$,
and for each $i \in F_{(j,A)}$ we have $\overline{x}^{(2)}_{A,ij} = \overline{y}_{A,i}$.
Let $R_{(j,A)} = \max_{i \in F_{(j,A)}} c_{ij}$ be the maximal distance from $j$ to an $i \in F_{(j,A)}$.
Recall that, by complementary slackness, we have $R_{(j,A)} \leq v_{j,A}^{*}$. 

The algorithm opens facilities randomly in each of the stages
with the probability of opening facility $i$ equal to $\overline{y}_i$
in Stage I, and $\overline{y}_{A,i}$ in Stage II of scenario $A$.
Some facilities are grouped in disjoint \emph{clusters} in order to correlate 
the opening of facilities from a single cluster.
The clusters are formed in each stage by the following procedures.
Let all facilities be initially unclustered.

In Stage I, consider all client-scenario pairs $(j,A) \in S$
in the order of non-decreasing values $R_{(j,A)}$.
If the set of facilities $F_{(j,A)}$ contains no facility
from the previously formed clusters, 
then form a new cluster containing facilities from $F_{(j,A)}$,
otherwise do nothing.
Recall that the total fractional opening of facilities in each cluster
equals $1$. Open exactly one facility in each cluster. Choose the facility randomly with the probability
of opening facility $i$ equal to the fractional opening $\overline{y}_i$.
For each unclustered facility $i$ open it independently with 
probability $\overline{y}_i$.

In Stage II of scenario $A$, consider all clients $j$ such that $(j,A) \notin S$
in the order of non-decreasing values $R_{(j,A)}$.
If the set of facilities $F_{(j,A)}$ contains no facility
from the previously formed clusters, 
then form a new cluster containing facilities from $F_{(j,A)}$,
otherwise do nothing.
Recall that the total fractional opening of facilities in each cluster
equals $1$. Open exactly one facility in each cluster. Choose the facility randomly with the probability
of opening facility $i$ equal to the fractional opening $\overline{y}_{A,i}$.
For each unclustered facility $i$ open it independently with 
probability $\overline{y}_{A,i}$.

Finally, at the end of Stage II of scenario $A$, connect each client $i \in A$ to the 
closest open facility (this can be a facility open in Stage I or in Stage II).

\paragraph{Expected-distance Lemma.} Before we proceed to bounding the expected cost of the 
solution obtained by the above algorithm, let us first bound the expected distance to 
an open facility from the set of facilities that fractionally service
client in the solution $(x^*,y^*)$. Denote by $C_{(j,A)} = \sum_{i \in \mathcal{F}}c_{ij}x^*_{A,ij}$
the fractional connection cost of client $j$ in scenario $A$ in the fractional solution $(x^*,y^*)$.
For the purpose of the next argument we fix a client-scenario pair $(j,A)$,
and slightly change the notation and let vector $\overline{y}$ encode fractional opening of both the first stage facilities 
and the second stage facilities of scenario $A$, all of them referred to with a singe index $i$.
A version of the following argument was used in the analysis of most of the previous
LP-rounding approximation algorithms for facility location problems, see, e.g.,
\cite{ByrkaAardal10}.  

\begin{lemma} \label{exp_distance_lemma}
 Let $y \in \{0,1\}^{|\mathcal{F}|}$ be a random binary vector encoding the facilities opened by the algorithm, 
 let $F'\subseteq \mathcal{F}$ be the set of facilities fractionally servicing client $j$ in scenario $A$, then:
\[
 E\left[\min_{i\in F', y_i=1}c_{ij} \ | \ \sum_{i\in F'} y_i \geq 1\right] \leq C_{(j,A)}. 
\]
\end{lemma}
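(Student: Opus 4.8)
The plan is to exhibit an explicit fractional "assignment" of client $j$ to the facilities of $F'$ that is consistent with the fractional opening $\overline{y}$, and then to argue that, conditioned on at least one facility of $F'$ being open, the expected distance to the closest open one is at most the weighted average of the $c_{ij}$ over $i \in F'$, which is exactly $C_{(j,A)}$ after suitable normalization. Concretely, recall that $F'$ is the set of facilities fractionally serving $(j,A)$; for each $i \in F'$ the relevant fractional assignment value is $x^*_{A,ij}$, and by (\ref{eqn:facloc-assign}) (assumed tight) these sum to $1$, so $\{x^*_{A,ij}\}_{i \in F'}$ is a probability distribution on $F'$ with $\sum_i c_{ij} x^*_{A,ij} = C_{(j,A)}$. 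After the scaling-and-splitting preprocessing, each $i \in F'$ has $\overline{y}_i$ equal to the corresponding $\overline{x}$-value, and facilities are opened so that $\Pr[y_i = 1] = \overline{y}_i$; moreover within a cluster exactly one facility opens, with the in-cluster opening probabilities proportional to the $\overline{y}_i$'s, while facilities in distinct clusters (or unclustered) open independently.

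The key step is the following averaging argument. Order the facilities of $F'$ as $i_1, i_2, \ldots$ by non-decreasing $c_{i_k j}$. Let $q_k = \overline{y}_{i_k}$ be the opening probability of $i_k$ (note $\sum_k q_k = 2$ before truncation, but the correct normalization will come from conditioning). Conditioned on $\sum_{i \in F'} y_i \geq 1$, write the expected closest-distance as $\sum_k c_{i_k j}\cdot \Pr[i_k \text{ is the closest open facility in } F' \mid \sum_{i\in F'} y_i \geq 1]$. I would bound $\Pr[i_k \text{ is the closest open facility in } F'] \le \Pr[i_k \text{ open}, \ i_1,\dots,i_{k-1} \text{ all closed}]$, and then show that, \emph{after dividing by} $\Pr[\sum_{i\in F'} y_i \ge 1]$, the resulting coefficients on the $c_{i_k j}$ are dominated (in the majorization/"closest-first" sense) by the coefficients $x^*_{A,i_k j}$. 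The clean way to do this is the standard inequality: if $G$ is a set of events where we open exactly one per cluster and clusters are independent, and the $i_k$ are sorted by distance, then for the event $B = \{\text{some } i \in F' \text{ open}\}$,
\[
E\Bigl[\min_{i \in F', y_i = 1} c_{ij} \ \Bigm|\ B\Bigr] \ \le\ \sum_{k} c_{i_k j} \cdot \frac{\overline{y}_{i_k}}{\sum_{\ell} \overline{y}_{i_\ell}} \ \le\ \sum_{i \in F'} c_{ij}\, x^*_{A,ij} \ =\ C_{(j,A)},
\]
where the first inequality is the fact that greedily opening the nearest facility with its own marginal, conditioned on success, never costs more than sampling a single facility from the normalized marginals, and the second uses that $\overline{y}_{i_k}/\sum_\ell \overline{y}_{i_\ell}$ majorizes $x^*_{A,i_k j}$ because the $\overline{y}$-values are a scaled-up version of the $x^{(1)}$ or $x^{(2)}$ parts, which are themselves at most $x^*_{A,ij}$, and both distributions are supported on $F'$ sorted the same way. (If $F'$ lies within a single cluster the middle term is literally the conditional distribution; if $F'$ spans several clusters/unclustered facilities one uses independence across clusters to reduce to this case.)

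I expect the main obstacle to be making the middle inequality — "conditioned-on-success greedy $\le$ single-sample from normalized marginals" — fully rigorous when $F'$ is split across several clusters and some unclustered facilities, since then the opening events are not a single "one-per-cluster" draw but a product of such draws with independent Bernoullis. The right handle is to couple: reveal clusters in the distance order; in each cluster the conditional law of "which facility opened, given one did" is the normalized marginal restricted to that cluster, and across clusters these are independent, so the minimum over the open set stochastically dominates (from below, in cost) a single draw from the global normalized marginal. This coupling, combined with the elementary observation $\overline{y}_i \ge \overline{x}^{(\cdot)}_{A,ij} \ge$ the relevant scaled share that forces the majorization, yields the bound; the rest is the bookkeeping already set up in the algorithm description, in particular that $\sum_{i \in F_{(j,A)}} \overline{x}^{(\cdot)}_{A,ij} = 1$ and $\overline{x}^{(\cdot)}_{A,ij} = \overline{y}_i$ on $F_{(j,A)}$.
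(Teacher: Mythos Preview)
Your overall strategy matches the paper's: both reduce the clustered, partially-correlated opening to a collection of independent events and then appeal to the standard ``conditional expected minimum $\leq$ weighted average'' fact for independent Bernoullis. The paper does this cleanly by partitioning $F'$ according to which cluster (if any) each facility lies in, collapsing each part $F_i = F' \cap C_i$ to a single meta-facility at distance $c_i = \sum_{\ell \in F_i}\overline{y}_\ell c_{\ell j}\big/\sum_{\ell \in F_i}\overline{y}_\ell$ with opening probability $\sum_{\ell \in F_i}\overline{y}_\ell$; the meta-facilities together with the unclustered facilities in $F_0$ are then genuinely independent, and the collapse preserves the weighted average. Your coupling sketch is aiming at the same reduction, just phrased differently.

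The one real issue is your second inequality. After the facility-splitting step described just before the lemma, every $i \in F'$ satisfies $\overline{y}_i = \overline{x}_{A,ij}$ (combining stages), so $\overline{y}_{i_k}\big/\sum_{\ell}\overline{y}_{i_\ell} = x^*_{A,i_k j}$ \emph{exactly}; there is nothing to majorize, and the right-hand side of your first inequality already \emph{is} $C_{(j,A)}$. Your proposed justification (``$\overline{y}$ is a scaled-up version of $x^{(1)}$ or $x^{(2)}$, which are at most $x^*$'') would not give a majorization in the direction you claim anyway, and it conflates $F'$ with the smaller cluster-candidate $F_{(j,A)}$. Drop this step; what remains is the paper's argument.
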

\begin{proof}
 Let $C_1$, \ldots $C_k$ be clusters intersecting $F'$. They partition $F'$ into $k+1$ disjoint sets of facilities: $F_0$ not intersecting any cluster, 
 and $F_i$ intersecting $C_i$ for $i= 1 \ldots k$. Note that opening of facilities in different sets $F_i$ is independent. Let $c_i$ be the average distance between $j$ and facilities in $F_i$. Observe that we may ignore the differences between facilities within sets $F_i$, $i \geq 1$,  and treat them as single facilities at distance $c_i$ with fractional opening equal the total opening of facilities in the corresponding set, because these are the expected distance and the probability that a facility will be opened in $F_i$. It remains to observe that the lemma obviously holds for the remaining case where $F'$ only contains facilities whose opening variables $y$ are rounded independently (preserving marginals).
\end{proof}

\paragraph{Analysis.}
Consider the solution $(\hat{x},\hat{y})$ constructed by our LP-rounding algorithm.
We fix scenario $A$ and bound the expectation of 
$COST(A) = \sum_{i \in \F} (f_i^I \hat{y}_{i} + f_i^A \hat{y}_{A,i}) + \sum_{j\in A}\sum_{i\in \F} c_{ij}\hat{x}_{A,ij}$.
Define $C_A = \sum_{j \in A} C_{(j,A)} = \sum_{j \in A} \sum_{i \in \mathcal{F}}c_{ij}x^*_{A,ij}$,
$F_A = \sum_{i \in \F} (f_i^I y^*_{i} + f_i^A y^*_{A,i})$, $V_A = \sum_{j \in A} v^*_{j,A}$.

\begin{lemma} \label{primal_dual_lemma}
  $E[COST(A)] \leq e^{-2} \cdot 3 \cdot V_A + (1-e^{-2}) \cdot C_A + 2 \cdot F_A$ in each scenario $A$.
\end{lemma}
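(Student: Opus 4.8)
The plan is to bound the three components of $COST(A)$ -- the Stage-I facility cost, the Stage-II facility cost, and the connection cost -- separately, and then combine the bounds.

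\textbf{Facility costs.} Since the algorithm opens facility $i$ in Stage I with probability exactly $\overline{y}_i = 2y^*_i$, and similarly in Stage II of scenario $A$ with probability $\overline{y}_{A,i} = 2y^*_{A,i}$, linearity of expectation gives $\E\left[\sum_{i\in\F} (f_i^I \hat y_i + f_i^A \hat y_{A,i})\right] = 2\sum_{i\in\F}(f_i^I y^*_i + f_i^A y^*_{A,i}) = 2F_A$. This is the ``$2\cdot F_A$'' term and is the easy part; the correlation within clusters does not affect marginals, so it plays no role here.

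\textbf{Connection cost.} Fix a client $j \in A$. I would split into two cases according to whether $(j,A)$ is first-stage clustered ($(j,A)\in S$, i.e. $r^{(1)}_{A,j}\geq 1/2$) or second-stage clustered. In either case, by the clustering procedure and the scaling by $2$, the relevant set $F_{(j,A)}$ of facilities has total fractional opening (in the appropriate stage) equal to $1$; moreover, if $F_{(j,A)}$ itself did not start a cluster, it intersects some previously formed cluster built from a pair $(j',A)$ with $R_{(j',A)} \leq R_{(j,A)}$, and that cluster contains a facility that is opened with probability $1$. Hence in every case there is, with probability $1$, at least one open facility ``close'' to $j$: either one of the $F'$-facilities fractionally serving $j$ gets opened, which by Lemma~\ref{exp_distance_lemma} is at expected distance $\leq C_{(j,A)}$ conditioned on it happening, or we fall back on the guaranteed-open cluster representative, which by the triangle inequality is within $R_{(j,A)} + 2R_{(j',A)} \leq 3R_{(j,A)} \leq 3 v^*_{j,A}$ of $j$. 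The probability that we must use the fallback is the probability that none of the $F'$-facilities is opened; since their total fractional opening is $2r^{(1)}_{A,j} + 2r^{(2)}_{A,j} = 2$ (using $r^{(1)}+r^{(2)}=1$) spread over independent clusters/singletons, this probability is at most $\prod(1-q_t) \le e^{-\sum q_t} = e^{-2}$. Combining, $\E[\text{dist}(j)] \leq (1-e^{-2})C_{(j,A)} + e^{-2}\cdot 3 v^*_{j,A}$, and summing over $j\in A$ gives $(1-e^{-2})C_A + 3e^{-2}V_A$.

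\textbf{Combining and the main obstacle.} Adding the facility bound $2F_A$ to the connection bound yields exactly $e^{-2}\cdot 3\cdot V_A + (1-e^{-2})\cdot C_A + 2\cdot F_A$, as claimed. The main obstacle is the connection-cost argument: one must check carefully that the conditioning in Lemma~\ref{exp_distance_lemma} applies to the correct set $F'$ and that the event ``some facility in $F'$ is opened'' is the complement of the ``fallback'' event, so that the two contributions combine with the right weights; and one must verify the $e^{-2}$ bound genuinely holds, i.e. that after scaling the total opening seen by $j$ across independent groups really is $2$ and that $1-x \leq e^{-x}$ is applied per group. A secondary subtlety is that $F'$ mixes Stage-I and Stage-II facilities of scenario $A$ (the notation change noted just before the lemma), and one must confirm that these are opened independently of each other except within shared clusters -- which holds because Stage-I and Stage-II clustering are run separately -- so that Lemma~\ref{exp_distance_lemma} applies verbatim.
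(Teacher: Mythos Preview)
Your proposal is correct and follows essentially the same argument as the paper: bound the expected facility cost by $2F_A$ via marginals, use Lemma~\ref{exp_distance_lemma} to get expected distance $\le C_{(j,A)}$ conditioned on some facility in $F'$ being open, bound the failure probability by $e^{-2}$ via the total scaled opening of~$2$, and fall back on the cluster representative at distance $\le 3R_{(j,A)}\le 3v^*_{j,A}$ otherwise. One small correction: in the Stage-I case the blocking cluster may come from a pair $(j',A')$ with a \emph{different} scenario $A'$ (Stage-I clustering runs over all $(j,A)\in S$ simultaneously), not $(j',A)$ as you wrote; this does not affect the bound since only $R_{(j',A')}\le R_{(j,A)}$ and the guaranteed opening are used.
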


\begin{proof}
Since the probability of opening a facility is
equal to its fractional opening in $(\overline{x},\overline{y})$, 
the expected facility-opening cost of $(\hat{x},\hat{y})$ 
equals facility-opening cost of $(\overline{x},\overline{y})$, which
is exactly twice the facility-opening cost of $(x^*,y^*)$.

Fix a client $j \in A$. 
The total (from both stages) fractional opening in $\overline{y}$ of facilities serving $j$ in 
$(\overline{x},\overline{y})$ is exactly $2$, hence the probability that at least one of these 
facilities is open in $(\hat{x},\hat{y})$ is at least $1-e^{-2}$.
Observe that, on the condition that at least one such facility is open,
by Lemma~\ref{exp_distance_lemma}, 
the expected distance to the closest of the open facilities is at most $C_{(j,A)}$.

With probability at most $e^{-2}$, none of the facilities fractionally serving $j$
in $(\overline{x},\overline{y})$ is open. In such a case we need to find a different
facility to serve $j$. We will now prove that for each client $j \in A$ there exists a facility
$i$ which is open in $(\hat{x},\hat{y})$, such that $c_{ij} \leq 3 \cdot v_{j,A}$.

Assume $(j,A) \in S$ (for $(j,A) \notin S$ the argument is analogous). 
If $F_{(j,A)}$ is a cluster, then at least one $i \in F_{(j,A)}$ is open
and $c_{ij} \leq v_{j,A}$.
Suppose $F_{(j,A)}$ is not a cluster, then by the construction of clusters,
it intersects a cluster $F_{(j',A')}$ with $R_{(j',A')} \leq R_{(j,A)} \leq v_{j,A}$.
Let $i$ be the facility opened in cluster $F_{(j',A')}$ and let $i' \in F_{(j',A')} \cap F_{(j,A)}$.
Since $i'$ is in $F_{(j,A)}$, $c_{i'j} \leq R_{(j,A)}$. Since both $i$ and $i'$ are in $F_{(j',A')}$,
both $c_{ij'} \leq R_{(j',A')}$ and $c_{i'j'} \leq R_{(j',A')}$.
Hence, by the 
triangle inequality, $c_{ij} \leq R_{(j,A)} + 2 \cdot R_{(j',A')} \leq 3 \cdot R_{(j,A)} \leq 3 \cdot v_{j,A}$.

Thus, the expected cost of the solution in scenario $A$ is:
\begin{eqnarray*} 
 E[COST(A)] & \leq & e^{-2} \cdot 3 \cdot \sum_{j \in A} v_{j,A}^* + 
 (1-e^{-2})  (\sum_{j\in A}\sum_{i\in \F} c_{ij}x^*_{A,ij}) + 
 2 \cdot (\sum_{i \in \F} (f_i^I y^*_{i} + f_i^A y^*_{A,i})),
\end{eqnarray*}
\rev{that} is \rev{exactly} 
$e^{-2} \cdot 3 \cdot V_A + (1-e^{-2}) \cdot C_A + 2 \cdot F_A$.
\end{proof}

Define $F^* = \sum_{i \in \mathcal{F}} f_i^I y_i +
\sum_A p_A (\sum_i f_i^A y_{A,i})$
and $C^* = \sum_A p_A (\sum_{j \in A} \sum_i c_{ij} x_{A,ij})$.
Note that we have $F^* = \sum_A p_A F_A$, $C^* = \sum_A p_A C_A$, and
$F^*+C^* = \sum_A p_A V_A$.
Summing up the expected cost over scenarios we obtain the following
estimate on the general expected cost, where the expectation is both 
over the choice of the scenario and over the random choices of our algorithm.

\begin{corollary}
 $E[COST(\hat{x},\hat{y})] \leq 2.4061 \cdot F^* + 1.2707 \cdot C^*$.
\end{corollary}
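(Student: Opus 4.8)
The plan is to derive the corollary simply by averaging the per-scenario guarantee of Lemma~\ref{primal_dual_lemma} over the random arrival of the scenario. First I would note that, by the definition of the objective (and since the stage-I opening cost $\sum_i f_i^I \hat y_i$ is a single quantity carried in full inside each $COST(A)$, while $\sum_A p_A = 1$), the global expected cost decomposes exactly as $E[COST(\hat x,\hat y)] = \sum_A p_A\, E[COST(A)]$, where the inner expectation is over the internal randomness of the rounding only. Lemma~\ref{primal_dual_lemma} bounds each such inner expectation, and since it is an inequality between expectations it is preserved under the convex combination $\sum_A p_A(\cdot)$.

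Plugging the lemma in termwise yields
\[ E[COST(\hat x,\hat y)] \;\le\; \sum_A p_A\bigl(3e^{-2} V_A + (1-e^{-2}) C_A + 2 F_A\bigr) \;=\; 3e^{-2}\!\sum_A p_A V_A \;+\; (1-e^{-2})\!\sum_A p_A C_A \;+\; 2\!\sum_A p_A F_A. \]
At this point I would invoke the three identities recorded just before the corollary: $\sum_A p_A F_A = F^*$ and $\sum_A p_A C_A = C^*$ are immediate from the definitions of $F_A, C_A, F^*, C^*$, while $\sum_A p_A V_A = F^*+C^*$ is LP strong duality — the facility-location LP is feasible and bounded, so the optimal dual value $\sum_A p_A\sum_{j\in A} v^*_{j,A}$ equals the optimal primal value $F^*+C^*$. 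Collecting the coefficients of $F^*$ and $C^*$ then gives $E[COST(\hat x,\hat y)] \le (2+3e^{-2})\,F^* + (1+2e^{-2})\,C^*$.

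The proof is completed by the numerical estimates $3e^{-2} < 0.4061$ and $2e^{-2} < 0.2707$, so that $2+3e^{-2} < 2.4061$ and $1+2e^{-2} < 1.2707$. I do not anticipate any real obstacle here: this step is a routine aggregation. The only two points deserving a word of care are (a) that the per-scenario statement of Lemma~\ref{primal_dual_lemma} is genuinely an inequality on conditional expectations, hence survives the scenario-averaging, and (b) that strong duality is legitimately available for the stochastic facility-location LP, which is what lets the $V_A$ term be re-expressed through $C^*$ and $F^*$ rather than left as an opaque dual quantity.
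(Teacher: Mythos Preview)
Your proposal is correct and follows essentially the same route as the paper: decompose $E[COST(\hat{x},\hat{y})]=\sum_A p_A\,E[COST(A)]$, apply Lemma~\ref{primal_dual_lemma} termwise, and then collapse $\sum_A p_A V_A$, $\sum_A p_A C_A$, $\sum_A p_A F_A$ into $F^*+C^*$, $C^*$, $F^*$ using the identities stated just before the corollary. The only difference is that you spell out the strong-duality justification for $\sum_A p_A V_A = F^*+C^*$, which the paper simply asserts.
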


\begin{proof}
$E[COST(\hat{x},\hat{y})]$ is $\sum_A p_A E[COST(A)]$, which is at most
\begin{eqnarray*}
\sum_A p_A\left( e^{-2} \cdot 3 \cdot V_A + (1-e^{-2}) \cdot C_A + 2 \cdot F_A \right) & = & (1-e^{-2}) \cdot C^* + 2 \cdot F^* + 3e^{-2} (\sum_A p_A V_A)\\
 & = & (1-e^{-2}) \cdot C^* + 2 \cdot F^* + 3e^{-2} (F^*+C^*)\\
 & = & (2 + e^{-2} \cdot 3) F^* + (1 + e^{-2} \cdot 2) C^* \\
 & \leq & 2.4061 \cdot F^* + 1.2707 \cdot C^*.
\end{eqnarray*}
\end{proof}

\subsection{Two algorithms combined}
\label{sec:ufl-combined}

We will now combine the algorithms from \S~\ref{sec:ufl-2stage-expmin} and \S~\ref{sec:LP-rounding}
to obtain an improved approximation guarantee for the problem
of minimizing the expected cost over the choice of the scenario.

To this end we will analyze the cost of the computed solution with respect to the 
the facility opening cost $F$ and the connection cost $C$ of an optimal solution to the problem.

In \S~\ref{sec:LP-rounding} we gave a rounding procedure whose cost is bounded in terms of
the cost of the initial fractional solution. It was shown that the algorithm returns a soulution of cost 
at most $2.4061$ times the facility opening cost plus $1.2707$ timest the connection cost of the fractional solution.
Observe, that it suffices to apply this procedure to a fractional solution obtained by solving a properly scaled LP to get an algorithm with a corresponding bifactor approximation guarantee. In particular, if we scale the 
facility opening costs by $2.4061$ and the connection costs $c_{ij}$ by $1.2707$ before solving the LP,
and then round the obtained fractional solution with the described procedure, we obtain a solution of cost at most 
$2.4061 F + 1.2707 C$. We will call this algorithm ALG1.

Now consider the algorithm discussed in \S~\ref{sec:ufl-2stage-expmin}. For the choice of a parameter
$\alpha = 0.2485$ it was shown to be a $2.369$-approximation algorithm. We will now consider 
a different choice, namely $\alpha = 0.37$. It is easy to observe that it results in an algorithm computing solutions of cost at most $2.24152 F + 2.8254 C$. We will call this algorithm ALG2.

Consider the algorithm ALG3, which tosses a coin that comes heads with probability $p=0.3396$.
If the coin comes heads, then ALG1 is executed; if it comes tails ALG2 is used.
The expected cost of the solution may be estimated as: 
$(p \cdot 2.4061 + (1-p)\cdot 2.24152) F + (p \cdot 1.2707 + (1-p)\cdot 2.8254) C \leq 2.2975 (F + C)$.
Therefore, ALG3 is a $2.2975$-approximation algorithm for the $2$-stage stochastic facility location problem.
Note that the initial coin tossing in ALG3 may be derandomized by running both ALG1 and ALG2
and taking the better of the solutions.

\subsection{Facility location with per-scenario bounds}
\label{sec:per-scenario}
Consider again the $2$-stage facility location problem, and a corresponding
optimal fractional solution. We now describe a randomized rounding
scheme so that for each scenario $A$, its expected final (rounded)
cost is at most $2.4957$ times its fractional counterpart $Val_A = \sum_{i \in \F} (f_i^I y^*_i + f_i^A y^*_{A,i}) + \sum_{j\in A}\sum_{i\in \F} c_{ij}x^*_{A,ij}$,
improving on the $3.225 \cdot Val_A$ bound of \cite{Swamy04} and the $3.095 \cdot Val_A$
bound from an earlier version of this paper \cite{Srinivasan07}.


Let us split the value of the fractional solution $Val_A$
into the fractional connection cost $C_A = \sum_{j\in A}\sum_{i\in \F} c_{ij}x^*_{A,ij}$ 
and fractional facility-opening cost $F_A = \sum_{i \in \F} (f_i^I y^*_i + f_i^A y^*_{A,i})$ 

Before we proceed with the per-scenario algorithm, 
let us first note that it is not possible to directly use the analysis from
the previous setting in the per-scenario model. This is because the dual costs $V_A$
do not need to be equal $Val_A = F_A + C_A$ in each scenario $A$. It is possible, for instance,
that the fractional opening of a facility in the first stage is entirely paid 
from the dual budget of a single scenario, despite the fact that clients not active
in this scenario benefit from the facility being open. This can be observed, for instance, in the following
simple example. 

Consider two clients $c^1$ and $c^2$, and two facilities $f^1$ and $f^2$.
All client facility distances are $1$, except $c_{1,2}=dist(c^1,f^2)=3$.
Scenarios are: $A^1 = \{c^1\}$ and $A^2=\{c^2\}$, and they occur with probability $1/2$ each.
The facility-opening costs are: $f_1^I=2$, $f_2^I=\epsilon$, $f_1^A=f_2^A=4$ for both scenarios $A$. 
It is easy to see that the only optimal fractional solution is integral
and it opens facility $f^1$ in the first stage, and opens no more facilities
in the second stage. Therefore, $Val(A^1)=Val(A^2)=3$.
However, in the dual problem, client $c^2$ has an advantage over $c^1$ in the access 
to the cheaper facility $f^2$, and therefore in no optimal dual solution client $c_2$ will pay
more than $\epsilon$ for the opening of facility $f^1$. In consequence, most of the cost of opening $f^1$
is paid by the dual budget of scenario $A^1$. Therefore, the dual budget $V_{A^1}$ 
is strictly greater than the primal bound $Val_{A^1}$ which we use as an estimate
of the cost of the optimal solution in scenario $A^1$.

Bearing the above example in mind, we construct an LP-rounding algorithm
that does not rely on the dual bound on the length of the created connections.
We use a primal bound, which is obtained by scaling the opening variables a little more
and using just a subset of fractionally connected facilities for each client
in the process of creating clusters. Such a simple filtering technique,
whose origins can be found in the work of Lin and Vitter~\cite{DBLP:conf/stoc/LinV92}, provides slightly weaker
but entirely primal, per-scenario bounds.

\paragraph{Algorithm.}

As before, we describe a randomized LP-rounding algorithm that transforms the fractional 
solution $(x^*,y^*)$ into an integral solution $(\hat{x},\hat{y})$; 
the expectation of the cost that we compute is over the 
random choices of the algorithm, but not over the
random choice of the scenario. 
Again, we assume that no facility is fractionally opened in $(x^*,y^*)$ 
in both stages.

Again, define $x_{A,ij}^{(1)}$, and $x_{A,ij}^{(2)}$ as before. However, the set of \emph{first stage clustered} pairs $(j,A)$
will now be determined differently.

We will now scale the fractional solution by a factor of $\gamma > 2$.
Define $\overline{x}_{A,ij} =\gamma \cdot x^*_{A,ij}$,  $\overline{x}_{A,ij}^{(1)} = \gamma \cdot x_{A,ij}^{(1)} $, $\overline{x}_{A,ij}^{(2)} = \gamma \cdot x_{A,ij}^{(2)} $, 
$\overline{y}_i =  \gamma \cdot y^*_i $, and $\overline{y}_{A,i} =  \gamma \cdot y^*_{A,i} $. Note that the scaled fractional solution
$(\overline{x},\overline{y})$ can have facilities with fractional opening of more than $1$.
For simplicity of the analysis, we do not round these facility-opening values to $1$, but rather split such facilities.
More precisely, we split each facility $i$ with fractional opening $\overline{y}_i > \overline{x}_{A,ij}^{(1)} > 0$ (or $\overline{y}_{A,i} > \overline{x}_{A,ij}^{(2)} > 0$) 
for some $(A, j)$ into $i'$ and $i''$, such that 
 $\overline{y}_{i'} = \overline{x}_{A,ij}^{(1)}$ and $\overline{y}_{i''} = \overline{y}_i - \overline{x}_{A,ij}^{(1)}$.
We also split facilities whose fractional opening exceeds one.
%
%

Define \\ 
\[
F_{(j,A)}^I = \left\{ 
\begin{array}{l l}
    \rev{\mathsf{argmin}}_{F' \subseteq \F : \sum_{i \in F'} \overline{x}^{(1)}_{A,ij} \geq 1 } \rev{\mathsf{max}}_{i \in F'} c_{ij} &  
    \mbox{if $\sum_{i \in \F} \overline{x}^{(1)}_{A,ij} \geq 1$}\\
  \emptyset &  \mbox{if $\sum_{i \in \F} \overline{x}^{(1)}_{A,ij} < 1$}\\
\end{array} \right.
\]

\[
F_{(j,A)}^{II} = \left\{ 
\begin{array}{l l}
    \rev{\mathsf{argmin}}_{F' \subseteq \F : \sum_{i \in F'} \overline{x}^{(2)}_{A,ij} \geq 1 } \rev{\mathsf{max}}_{i \in F'} c_{ij} &  
    \mbox{if $\sum_{i \in \F} \overline{x}^{(2)}_{A,ij} \geq 1$}\\
  \emptyset &  \mbox{if $\sum_{i \in \F} \overline{x}^{(2)}_{A,ij} < 1$}\\
\end{array} \right.
\]

Note that  these sets can easily be computed by considering facilities in an order of non-decreasing distances $c_{ij}$
to the considered client $j$.
Since we can split facilities, w.l.o.g., for all $j \in \C$ we assume that if $F_{(j,A)}^{I}$ is nonempty then
$\sum_{i \in F_{(j,A)}^{I}} \overline{x}^{(1)}_{A,ij} = 1$, and if $F_{(j,A)}^{II}$ is not empty then
$\sum_{i \in F_{(j,A)}^{II}} \overline{x}^{(2)}_{A,ij} = 1$.
Define $d_{(j,A)}^I = \rev{\mathsf{max}}_{i \in F_{(j,A)}^I} c_{ij}$ and $d_{(j,A)}^{II} = \rev{\mathsf{max}}_{i \in F_{(j,A)}^{II}} c_{ij}$.
Let $d_{(j,A)} = \rev{\mathsf{min}}\{d_{(j,A)}^I, d_{(j,A)}^{II}\}$.

For a client-scenario pair $(j,A)$, if we have $d_{(j,A)} = d_{(j,A)}^I$, then we call such a pair \emph{first-stage clustered},
and put its \emph{cluster candidate} $F_{(j,A)} = F_{(j,A)}^I$. Otherwise, if $d_{(j,A)} = d_{(j,A)}^{II} < d_{(j,A)}^{I}$, 
we say that $(j,A)$ is \emph{second-stage clustered} and put its \emph{cluster candidate} $F_{(j,A)} = F_{(j,A)}^{II}$

Recall that we use $C_{(j,A)}=\sum_i c_{ij} x^*_{A,ij}$ to denote the fractional connection cost of client $j$ in scenario $A$.
Let us now argue that distances to facilities in \emph{cluster candidates} are not too large. 

\begin{lemma} \label{lemma:d_j_A}
 $d_{(j,A)} \leq \frac{\gamma}{\gamma-2} \cdot C_{(j,A)}$ for all pairs $(j,A)$. 
\end{lemma}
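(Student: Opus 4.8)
The plan is to run the standard \emph{filtering} argument (in the spirit of Lin--Vitter) separately on the two ``halves'' $\overline{x}^{(1)}$ and $\overline{x}^{(2)}$ of the scaled fractional assignment of client $j$ in scenario $A$, and then to combine the two estimates using the fact that, together, these halves account for a total fractional mass of $\gamma$ on the facilities serving $(j,A)$.

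First I would record the elementary observation that $F_{(j,A)}^{I}$ is obtained greedily: ordering the facilities by non-decreasing $c_{ij}$ and taking a prefix until the accumulated value of $\overline{x}^{(1)}_{A,ij}$ first reaches $1$ yields a minimizer of $\max_{i\in F'}c_{ij}$ subject to $\sum_{i\in F'}\overline{x}^{(1)}_{A,ij}\ge 1$, and likewise for $F^{II}_{(j,A)}$. Hence, writing $\mu:=\frac{\gamma}{\gamma-2}\,C_{(j,A)}$, it suffices to show that at least one of the two partial sums $\sum_{i:\,c_{ij}\le\mu}\overline{x}^{(1)}_{A,ij}$ and $\sum_{i:\,c_{ij}\le\mu}\overline{x}^{(2)}_{A,ij}$ is $\ge 1$: in that case either $d^{I}_{(j,A)}\le\mu$ or $d^{II}_{(j,A)}\le\mu$, and since $d_{(j,A)}=\min\{d^{I}_{(j,A)},d^{II}_{(j,A)}\}$ we are done.

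The core step is then a Markov-type estimate by contradiction. Suppose both partial sums are strictly below $1$. Adding them and using $\overline{x}^{(1)}_{A,ij}+\overline{x}^{(2)}_{A,ij}=\gamma\,x^*_{A,ij}$ (an identity that survives both the scaling by $\gamma$ and the facility-splitting, since those preserve distances and the per-client fractional mass) gives $\gamma\sum_{i:\,c_{ij}\le\mu}x^*_{A,ij}<2$; combined with the assignment constraint~(\ref{eqn:facloc-assign}), $\sum_i x^*_{A,ij}\ge 1$, this forces $\sum_{i:\,c_{ij}>\mu}x^*_{A,ij}>1-\tfrac{2}{\gamma}=\tfrac{\gamma-2}{\gamma}$. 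Bounding $C_{(j,A)}=\sum_i c_{ij}x^*_{A,ij}$ below by the contribution of the far facilities (whose coefficients all exceed $\mu$, and whose total mass is strictly positive) yields $C_{(j,A)}>\mu\cdot\tfrac{\gamma-2}{\gamma}=C_{(j,A)}$, a contradiction. Therefore one of the two prefixes already reaches mass $1$ at distance $\le\mu$, which proves the lemma. The degenerate case $C_{(j,A)}=0$ (so $\mu=0$) is immediate, as then every facility fractionally serving $j$ sits at distance $0$, and so does whichever cluster candidate is nonempty.

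Rather than a genuine obstacle, the main point requiring care is the bookkeeping around scaling and splitting. One must check that $F^{I}_{(j,A)}$ and $F^{II}_{(j,A)}$ are well defined, i.e.\ that at least one is nonempty: this holds because $\sum_i\overline{x}^{(1)}_{A,ij}+\sum_i\overline{x}^{(2)}_{A,ij}=\gamma\sum_i x^*_{A,ij}\ge\gamma>2$, so one of the two totals exceeds $1$. One should also confirm that splitting lets us assume each accumulated mass hits exactly $1$ without altering any $c_{ij}$, and that the strictness in the Markov step is legitimate (the far-facility mass being $>\tfrac{\gamma-2}{\gamma}>0$).
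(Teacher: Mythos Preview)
Your argument is correct and rests on the same Markov/filtering idea as the paper: at least $\tfrac{\gamma-2}{\gamma}$ of the fractional mass serving $(j,A)$ lies at distance $\ge d_{(j,A)}$, which lower-bounds $C_{(j,A)}$. The paper proves this in the forward direction: it assumes w.l.o.g.\ that $d_{(j,A)}=d^{I}_{(j,A)}\le d^{II}_{(j,A)}$ and then splits into two cases according to whether $F^{II}_{(j,A)}$ is nonempty. In Case~1 (both $F^{I}$ and $F^{II}$ nonempty) the facilities outside $F^{I}_{(j,A)}\cup F^{II}_{(j,A)}$ carry $x^*$-mass exactly $\tfrac{\gamma-2}{\gamma}$ and all sit at distance $\ge d_{(j,A)}$; in Case~2 ($F^{II}_{(j,A)}=\emptyset$) the paper instead uses that $\sum_i\overline{x}^{(1)}_{A,ij}>\gamma-1$, so the first-stage facilities outside $F^{I}_{(j,A)}$ already carry $\overline{x}^{(1)}$-mass $>\gamma-2$. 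Your contrapositive organization---fix the target $\mu=\tfrac{\gamma}{\gamma-2}C_{(j,A)}$ and show that both prefix sums cannot simultaneously be below $1$---collapses these two cases into a single line, which is arguably cleaner; the paper's version in return makes explicit which concrete set of facilities witnesses the residual mass.
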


\begin{proof}
Fix a client-scenario pair $(j,A)$. Assume $F_{(j,A)} = F_{(j,A)}^I$ (the other case is symmetric).
Recall that in this case we have $d_{(j,A)} = d_{(j,A)}^I \leq d_{(j,A)}^{II}$.
Consider the following two subcases.\\
\textbf{Case 1.} $\sum_{i \in F_{(j,A)}^{II}} \overline{x}^{(2)}_{A,ij} = 1$.\\
Observe that we have $c_{ij} \geq d_{(j,A)}$ for all $i \in F'=\F\setminus (F_{(j,A)}^I \cup F_{(j,A)}^{II})$.
Note also that $\sum_{i \in F'} \overline{x}_{A,ij} = \gamma - 2$ and
$\sum_{i \in F'} x^*_{A,ij} = \frac{\gamma - 2}{\gamma}$. 
Hence, $C_{(j,A)}=\sum_{i \in \F} x^*_{A,ij} c_{ij} \geq \sum_{i \in F'} x^*_{A,ij} c_{ij} \geq \frac{\gamma - 2}{\gamma} \cdot d_{(j,A)}$.

\noindent
\textbf{Case 2.} $\sum_{i \in F_{(j,A)}^{II}} \overline{x}^{(2)}_{A,ij} = 0$, which implies that 
$\sum_{i \in \F} \overline{x}^{(2)}_{A,ij} < 1$. Observe that now we have $\sum_{i \in \F} \overline{x}^{(1)}_{A,ij}  > \gamma - 1$,
and therefore $\sum_{i \in \F \setminus F_{(j,A)}^I} \overline{x}^{(1)}_{A,ij} > \gamma-2$.
Recall that $c_{ij} \geq d_{(j,A)}$ for all $i \in (\F\setminus F_{(j,A)}^I) $ with $\overline{x}^{(1)}_{A,ij} > 0$, hence
$C_{(j,A)} = \sum_{i \in \F} x^*_{A,ij} c_{ij} \geq \sum_{i \in (\F\setminus F_{(j,A)}^I)\: \overline{x}^{(1)}_{A,ij} > 0 } x^*_{A,ij} c_{ij} > 
\frac{\gamma - 2}{\gamma} \cdot d_{(j,A)}$. 
\end{proof}
 
Like in Section~\ref{sec:LP-rounding}, the algorithm opens facilities randomly in each of the stages
with the probability of opening facility $i$ equal to $\overline{y}_i$
in Stage I, and $\overline{y}_{A,i}$ in Stage II of scenario $A$.
Some facilities are grouped in disjoint \emph{clusters} in order to correlate 
the opening of facilities from a single cluster.
The clusters are formed in each stage by the following procedure.
Let all facilities be initially unclustered.
In Stage I, consider all first-stage clustered client-scenario pairs, 
i.e., pairs $(j,A)$ such that $d_{(j,A)} = d_{(j,A)}^I$
(in Stage II of scenario $A$, consider all second-stage clustered client-scenario pairs)
in the order of non-decreasing values $d_{(j,A)}$.
If the set of facilities $F_{(j,A)}$ contains no facility
from the previously formed clusters, 
then form a new cluster containing facilities from $F_{(j,A)}$,
otherwise do nothing.
In each stage, open exactly one facility in each cluster.
Recall that the total fractional opening of facilities in each cluster
equals $1$. Within each cluster choose the facility randomly with the probability
of opening facility $i$ equal to the fractional opening $\overline{y}_i$ 
in Stage I, or $\overline{y}_{A,i}$ in Stage II of scenario $A$.
For each unclustered facility $i$, open it independently with 
probability $\overline{y}_i$ in Stage I, and with probability $\overline{y}_{A,i}$ in Stage II of scenario $A$.
Finally, at the end of Stage II of scenario $A$, connect each client $i \in A$ to the 
closest open facility.

\paragraph{Analysis.}
The expected facility-opening cost is obviously $\gamma$ times the fractional opening cost.
More precisely, the expected facility-opening cost in scenario $A$ equals $\gamma \cdot F^*_A =
\gamma \cdot \sum_{i \in \mathcal{F}} f_i^I y_i +\sum_i f_i^A y_{A,i}$.
It remains to bound the expected connection cost in scenario $A$ in terms of 
$C^*_A =\sum_{j \in A} \sum_i c_{ij} x_{A,ij}$. 

\begin{lemma}
The expected connection cost of client $j$ in scenario $A$ is at most $(1+\frac{2\gamma+2}{\gamma-2} e^{-\gamma}) \cdot C_{(j,A)}$.
\end{lemma}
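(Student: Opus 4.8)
The plan is to fix the client $j$ and scenario $A$ and to split the analysis according to whether the algorithm opens at least one of the facilities that fractionally serve $j$ in scenario $A$. Let $F'\subseteq\F$ be the support of $x^*_{A,ij}$ (equivalently of $\overline{x}_{A,ij}$), and let $\mathcal{E}$ be the event that some $i\in F'$ is opened. By the standing assumption that no facility is fractionally opened in both stages, each $i\in F'$ contributes to $j$'s connection in scenario $A$ through exactly one of the two stages, and after the facility-splitting step it is opened with marginal probability equal to its scaled opening ($\overline{y}_i$ if it is a Stage-I facility, $\overline{y}_{A,i}$ if a Stage-II facility of $A$); these marginals sum to $\gamma$ over $F'$. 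The facilities of $F'$ fall into pairwise disjoint, independently-treated groups: one group for each cluster met by $F'$ (whose failure probability is $1-s\le e^{-s}$, where $s\le 1$ is that group's $F'$-mass), together with a singleton for each unclustered facility of $F'$ (contributing $\prod(1-\overline{y}_i)\le e^{-\sum\overline{y}_i}$). Multiplying the per-group failure probabilities yields $\Pr[\overline{\mathcal{E}}]\le e^{-\gamma}$.

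Conditioned on $\mathcal{E}$, Lemma~\ref{exp_distance_lemma} applies directly (the clustering rule used here has exactly the structure its proof needs: within a cluster a single facility is opened with probability proportional to its fractional opening, and unclustered facilities are opened independently): the expected distance from $j$ to the nearest open facility of $F'$ is at most the opening-weighted average distance over $F'$, namely $\frac{1}{\gamma}\sum_{i\in F'}\overline{x}_{A,ij}c_{ij}=C_{(j,A)}$, so the expected connection cost of $j$ in $A$ conditioned on $\mathcal{E}$ is at most $C_{(j,A)}$. Conditioned on $\overline{\mathcal{E}}$, I would exhibit an open facility within distance $3d_{(j,A)}$ of $j$ by the usual cluster-hopping argument. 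Say $(j,A)$ is first-stage clustered with cluster candidate $F_{(j,A)}=F_{(j,A)}^{I}$ (the second-stage case is symmetric, carried out inside Stage~II of scenario $A$). When $(j,A)$ was processed, either $F_{(j,A)}$ became a cluster of its own --- but then its opened facility lies in $F'$, contradicting $\overline{\mathcal{E}}$ --- or $F_{(j,A)}$ met an already-formed cluster $F_{(j',A')}$ with $d_{(j',A')}\le d_{(j,A)}$; choosing $i'\in F_{(j,A)}\cap F_{(j',A')}$ and letting $i^{\star}$ be the facility opened in $F_{(j',A')}$, the triangle inequality gives $c_{i^{\star}j}\le c_{i^{\star}j'}+c_{i'j'}+c_{i'j}\le 2d_{(j',A')}+d_{(j,A)}\le 3d_{(j,A)}$, and $i^{\star}$ is open in a stage that is active in scenario $A$. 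By Lemma~\ref{lemma:d_j_A}, $3d_{(j,A)}\le\frac{3\gamma}{\gamma-2}C_{(j,A)}$, so the connection cost of $j$ in $A$ never exceeds $\frac{3\gamma}{\gamma-2}C_{(j,A)}$.

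Writing $q=\Pr[\overline{\mathcal{E}}]\le e^{-\gamma}$ and using $\frac{3\gamma}{\gamma-2}\ge 1$, the expected connection cost of $j$ in scenario $A$ is at most
\[
 (1-q)\,C_{(j,A)} + q\cdot\frac{3\gamma}{\gamma-2}\,C_{(j,A)}
 = \left(1 + q\cdot\frac{2\gamma+2}{\gamma-2}\right)C_{(j,A)}
 \le \left(1 + \frac{2\gamma+2}{\gamma-2}\,e^{-\gamma}\right)C_{(j,A)},
\]
which is the claimed bound. The main obstacle I anticipate is not any single inequality but the bookkeeping around the facility-splitting and the two stages: one must make sure that the marginal opening probabilities used both in the $e^{-\gamma}$ tail bound and in the invocation of Lemma~\ref{exp_distance_lemma} are mutually consistent (each $i\in F'$ contributing to exactly one stage), that distinct clusters and the unclustered facilities are genuinely opened independently, and that the fallback facility $i^{\star}$ produced in the bad case is indeed one that scenario $A$ can connect to.
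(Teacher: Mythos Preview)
Your proposal is correct and follows essentially the same approach as the paper: split on the event that some facility in the fractional support of $(j,A)$ is opened, bound the failure probability by $e^{-\gamma}$ using the independence across clusters and the $1-s\le e^{-s}$ trick, invoke Lemma~\ref{exp_distance_lemma} on the good event, and use the $3d_{(j,A)}$ three-hop bound together with Lemma~\ref{lemma:d_j_A} on the bad event. The paper compresses the $e^{-\gamma}$ bound and the three-hop argument by pointing back to Lemma~\ref{primal_dual_lemma}, whereas you spell both out explicitly; your bookkeeping caveats at the end are well-placed but do not reveal any actual gap.
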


\begin{proof}
Consider a single client-scenario pair $(j,A)$.
Observe that the facilities fractionally connected to $j$ in scenario $A$
have the total fractional opening of $\gamma$ in the scaled facility-opening vector $\overline{y}$.
Since there is no positive correlation (only negative correlation in the disjoint clusters formed 
by the algorithm), with probability at least $1-e^{-\gamma}$ at least one such facility will be opened,
moreover, by Lemma~\ref{exp_distance_lemma} the expected distance to the closest of the open facilities from this set
will be at most the fractional connection cost $C_{(j,A)}$.

Just like in the proof of Lemma~\ref{primal_dual_lemma}, from the greedy construction of the clusters
in each phase of the algorithm, with probability $1$, there exists facility $i$ opened 
by the algorithm such that $c_{ij} \leq 3 \cdot d_{(j,A)}$.
We connect client $j$ to facility $i$ if no facility from facilities fractionally serving $(j,A)$ was opened.
We obtain that the expected connection cost of client $j$ is at most $(1-e^{-\gamma}) \cdot C_{(j,A)} + e^{-\gamma} \cdot 3 d_{(j,A)}$.
By Lemma~\ref{lemma:d_j_A}, this can by bounded by  
$(1-e^{-\gamma}) \cdot C_{(j,A)} + e^{-\gamma} \cdot 3 \cdot \frac{\gamma}{\gamma-2} \cdot C_{(j,A)} =
(1+\frac{2\gamma+2}{\gamma-2} e^{-\gamma}) \cdot C_{(j,A)}$
\end{proof}

To equalize the opening and connection cost approximation ratios we solve $(1+\frac{2\gamma+2}{\gamma-2} e^{-\gamma}) = \gamma$ and obtain the following.

\begin{theorem} \label{thm_rand_per-scenario}
The described algorithm with $\gamma = 2.4957$ delivers solutions such that the expected cost in each scenario $A$
is at most $2.4957$ times the fractional cost in scenario $A$. 
\end{theorem}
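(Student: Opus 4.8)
The plan is to assemble the per-scenario bound from the two estimates already in hand — the exact value of the expected facility-opening cost and the per-client bound on the expected connection cost — and then pick $\gamma$ to balance the two resulting coefficients. Concretely, I would fix an arbitrary scenario $A$ and write $E[COST(A)]$ as the sum of the expected facility-opening cost and the expected connection cost, which is legitimate by linearity of expectation. The paragraph just above the last lemma records that the former equals exactly $\gamma \cdot F^*_A$. For the latter, linearity of expectation over the clients $j \in A$ combined with that lemma gives $\sum_{j \in A}(1 + \frac{2\gamma+2}{\gamma-2}e^{-\gamma})\,C_{(j,A)} = (1 + \frac{2\gamma+2}{\gamma-2}e^{-\gamma})\cdot C^*_A$. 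Hence, for every scenario $A$,
\[
E[COST(A)] \;\le\; \gamma\,F^*_A \;+\; \left(1 + \frac{2\gamma+2}{\gamma-2}e^{-\gamma}\right) C^*_A .
\]

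Next I would choose $\gamma$ so that the two coefficients coincide, i.e.\ solve $\gamma = 1 + \frac{2\gamma+2}{\gamma-2}e^{-\gamma}$. To see that a usable solution exists, note that the right-hand side is continuous on $(2,\infty)$ — exactly the range in which Lemma~\ref{lemma:d_j_A} (and hence the whole connection-cost estimate) applies — blows up as $\gamma \downarrow 2$, and is eventually dominated by the line $y=\gamma$; so by the intermediate value theorem there is a root in $(2,\infty)$, and evaluating the equation numerically pins it down to the value quoted in the statement. For this value of $\gamma$ the displayed inequality becomes $E[COST(A)] \le \gamma\,(F^*_A + C^*_A)$ for every $A$.

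Finally I would identify $F^*_A + C^*_A$ with $Val_A$: by definition $F^*_A = \sum_{i\in\F}(f_i^I y^*_i + f_i^A y^*_{A,i})$ and $C^*_A = \sum_{j\in A}\sum_{i\in\F} c_{ij}x^*_{A,ij}$, whose sum is precisely $Val_A$. Thus $E[COST(A)] \le \gamma\cdot Val_A$ in every scenario, which is the claim. I would also recall, as noted when the algorithm was described, that the facility-splitting preprocessing step leaves every opening and connection cost unchanged, so the bound transfers back to the original instance.

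The substantive content has already been discharged in the lemmas: the real argument lives in the per-client connection bound, which combines the ``no positive correlation'' observation through Lemma~\ref{exp_distance_lemma}, the triangle-inequality clustering argument that guarantees a backup facility within $3\,d_{(j,A)}$, and the filtering bound of Lemma~\ref{lemma:d_j_A}. What remains for the theorem itself is pure bookkeeping plus a one-variable root-finding; the only point that needs a moment's care is confirming that the balancing value of $\gamma$ indeed satisfies $\gamma > 2$, so that Lemma~\ref{lemma:d_j_A} is invoked legitimately — which it is.
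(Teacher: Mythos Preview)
Your proposal is correct and follows exactly the paper's approach: the paper simply states that one equalizes the facility-cost factor $\gamma$ with the connection-cost factor $1+\frac{2\gamma+2}{\gamma-2}e^{-\gamma}$ established in the preceding lemma, and solves for $\gamma$. Your added care in invoking the intermediate value theorem and checking that the resulting $\gamma$ lies in $(2,\infty)$ (so that Lemma~\ref{lemma:d_j_A} applies) is welcome but goes slightly beyond what the paper spells out.
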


\paragraph{A note on the rounding procedure.}
An alternative way of interpreting the rounding process that we described above
is to think of facilities as having distinct copies for the first stage 
opening and for each scenario of the second stage. In this setting
each client-scenario pair becomes a client that may only connect itself
to either a first stage facility or a second stage facility of the specific scenario.
This results in a specific instance of the standard Uncapacitated Facility Location problem,
where triangle inequality on distances holds only within certain metric subsets of locations.
One may interpret our algorithm as an application of a version of the algorithm of Chudak and Shmoys~\cite{ChudakS98}
for the UFL problem applied to this specific UFL instance, where we take special care to create clusters only from
facilities originating from a single metric subset of facilities. Such a choice of the clusters
is sufficient to ensure that we may connect each client via a 3-hop path to a facility opened in each cluster. 

\subsection{Strict per scenario bound}
\label{sec:strict-per-scenario}

Note that our algorithm is a randomized algorithm with
bounded expected cost of the solution. The opening of facilities
in the second stage of the algorithm can be derandomized
by a standard technique of conditional expectations.
However, by simply derandomizing the first stage we would
obtain a solution for which the final cost could no longer be bounded as in Theorem~\ref{thm_rand_per-scenario} for 
every single scenario A. 
Nevertheless, a weaker but deterministic bound on the connection cost of each client is still possible.
Recall that by Lemma~\ref{lemma:d_j_A} we have that every client-scenario pair $(j,A)$ 
client $j$ is certainly connected with cost at most $d_{(j,A)} \leq \frac{\gamma}{\gamma-2} \cdot C_{(j,A)}$.
Hence, our algorithm with $\gamma=2.4957$ has a deterministic guarantee that each client is connected
in the final solution at most $15.11$ times further than in the fractional solution.

Please note that all the bounds are only based on the primal feasible solution,
and hence it is possible to use this approach to add any specific limitations 
as constraints to the initial linear program. In particular one may introduce budgets
for scenarios, or even to restrict the connection cost of a selected subset of clients.

There is a natural trade-off between the deterministic bounds
and bounds on the expectations. By selecting a different value of the initial 
scaling parameter we obtain different points on the trade-off curve.
In particular, for $\gamma=5$ we expect to pay for facility opening $5$ times the fractional opening cost,
the expected connection cost in each scenario will be $1.027$ times the fractional connection cost in this scenario,
and the deterministic upper bound on the connection cost of a single client will be at most $5$ times
the fractional connection cost of this client.

There remains the issue of nondeterminism of facility opening costs.
Observe that the second stage openings may easily be derandomized
by the standard conditional expectations method. In the derandomization process
it is possible to guarantee that a single linear objective is no more then it's expectation.
It is hence possible to have such guarantee, e.g., for the Stage II facility opening plus connection cost.

As mentioned before, it is not trivial to derandomize the Stage I facility openings.
If we decide for any fixed openings, we no longer have the per-scenario bound on the expected connection cost. 
Typically, a fixed Stage I decision favors one scenario over another.
Nevertheless, taking the most adversarial point of view, we may decide for $\gamma=5$, and then
for the cheapest possible realization of Stage I facility openings, still the connection costs will be at most $5$ 
times the fractional connection costs. This results in a truly deterministic $5$ approximation algorithm
in the strongest per-scenario model.

\smallskip
\noindent \textbf{Acknowledgments.} We thank Alok Baveja, David Shmoys and 
Chaitanya Swamy for helpful discussions and inputs. Thanks again to 
David and Chaitanya for writing the survey
\cite{SwamyS06} that inspired this work. 
Our gratitude also to Meena Mahajan and the
Institute of Mathematical Sciences at Chennai, India, for their
hospitality; part of this writing was completed at the IMSc. 
We also thank the anonymous reviewers who contributed greately
to the clarity of the presented models and algorithms.

The work of J. Byrka was partially conducted at CWI in Amsterdam,  
TU Eindhoven, EPFL Lausanne, and during a visit at the University of Maryland. 
His research was partially supported by the Future and Emerging Technologies 
Unit of EC (IST priority - 6th FP), under contract no. FP6-021235-2 (project ARRIVAL);
MNiSW grant number N N206 368839; FNP Homing Plus program, contract no. Homing Plus/2010-1/3; and Polish National Science Centre grant 2015/18/E/ST6/00456.
A. Srinivasan was partially supported by NSF Award CCR 0208005, 
NSF ITR Award CNS 0426683, NSF Awards CNS 0626636, CNS 1010789, CCF-1422569, and CCF-1749864 and by a research award from Adobe, Inc.

\end{document}